\definecolor{darkblue}{rgb}{0,0,.5}
\theoremstyle{plain}
\newtheorem{theorem}{Theorem}[section]
\newtheorem{proposition}[theorem]{Proposition}
\theoremstyle{definition}
\newtheorem{remark}[theorem]{Remark}
\def\C{\mathcal{C}}
\def\R{\mathbb{R}}
\def\F{\mathcal{F}}
\providecommand{\bysame}{\makebox[3em]{\hrulefill}\thinspace}
\newcommand{\longto}{\longrightarrow}
\def\vv<#1>{\langle#1\rangle}
\def\ww<#1>{\langle\langle#1\rangle\rangle}
\providecommand{\del}{\partial}
\newcommand{\om}{\omega}
\newcommand{\Om}{\Omega}
\newcommand{\eps}{\varepsilon}
\newcommand{\lam}{\lambda}
\newcommand{\Lam}{\Lambda}
\newcommand{\by}[2]{\mbox{$\frac{#1}{#2}$}}
\newcommand{\cinf}{\mbox{$C^{\infty}$}}
\providecommand{\set}[1]{\mbox{$\{#1\}$}}
\newcommand{\gu}{\mathfrak{g}}
\newcommand{\Ad}{\mbox{$\text{\upshape{Ad}}$}}
\newcommand{\ad}{\mbox{$\text{\upshape{ad}}$}}
\newcommand{\ds}{\mbox{$\,\delta_t$}}
\newcommand{\di}{\mbox{$\,d_t$}}
\newcommand{\cl}[1]{\mbox{$[#1]_{\mathfrak{g}^*}$}}
\newcommand{\bigcl}[1]{\mbox{$\Big[#1\Big]_{\mathfrak{g}_0^*}$}}
\newcommand{\SVect}{\mbox{$\textup{SVect}$}}
\newcommand{\Vect}{\mbox{$\textup{Vect}$}}
\title[Probabilistic representation of helicity in viscous fluids]{%
%\texttt{Private notes:}\\
%\vspace{1cm}
Probabilistic representation of helicity \\
in viscous fluids
}
\author{Simon Hochgerner}
\address{\"Osterreichische Finanzmarktaufsicht (FMA),
Otto-Wagner Platz 5, A-1090 Vienna
}
\email{simon.hochgerner@fma.gv.at} 
\begin{document}

\begin{abstract}
It is shown that the helicity of three dimensional viscous incompressible flow can be
identified with the overall linking of the fluid's initial vorticity to the expectation of a stochastic mean field limit.  
The relevant mean field limit is obtained by following the Lagrangian paths in the stochastic Hamiltonian interacting particle system of [S.\ Hochgerner, Proc.\ R.\ Soc.\ A 
\textbf{474}:20180178]. 
\end{abstract}

\maketitle
% \tableofcontents

\section{Introduction} 

%\todo{
%Ideal incompressible fluid mechanics has two conserved quantities, energy and helicity. The structural reasoning behind these is different: energy arises because of the deterministic energy Poisson commutes with itself,  while the helicity function is a Casimir of the Poisson structure. In the stochastic perturbation of ideal fluid mechanics the energy function is generally no longer a conserved quantity, since it would have to be in involution with all the perturbation functions. However, Casimirs remain Casimirs. It can therefore be expected that stochastic incompressible fluid mechanics preserves (some version of) helicity. 
%}

The evolution of the velocity field, $u=u(t,x)=u_t(x)$, of a three dimensional incompressible fluid with constant mass density, $\rho=1$, is given by 
\begin{equation}
    \label{1e:NS}
    \by{\del}{\del t}u
    = 
    -\nabla_u u - \nabla p + \nu\Delta u,
    \qquad
    \textup{div}\,u = 0,
    \qquad
    u(0,.) = u_0
\end{equation}
where $t\in[0,T]$, $x\in \R^3$, $\nabla_u u = \vv<u,\nabla>u = \sum_{j=1}^3 u^j\del_j u$ is the covariant derivative in $\R^3$, $p = p(t,x)$ is the pressure determined by $\textup{div}\,u=0$, and the smooth divergence free vector field $u_0$ is an initial condition which decays sufficiently rapidly at infinity.    
If $\nu>0$ then \eqref{1e:NS} is the incompressible Navier-Stokes equation, and if $\nu=0$ it is the incompressible Euler equation in $\R^3$. 
%If the fluid's domain has a non-empty boundary then the Euler and Navier-Stokes equations are also subject to distinct boundary conditions.

The helicity of the fluid is defined by 
\begin{equation}
    \label{1e:hel}
    \mathcal{H}_t
    = \int_{\mathbb{R}^3} \vv<u_t,\textup{curl}\,u_t>\,dx
\end{equation}
where $dx$ is the Euclidean volume element in $\R^3$. Helicity is a topological quantity measuring the overall degree of linking and knotting of vortex lines (Moffatt et al.~\cite{M69,MR92,MT92}). 

Let $u$ be a solution to the Euler equation ($\nu=0$) and consider the Lagrangian flow $g=g_t(x)$ generated by 
\[
 \by{\del}{\del t}g_t = u_t\circ g_t,
 \qquad
 g_0 = e
\] 
where $e$ is the identity map in $\R^3$.
Then $g_t$ is a curve in the group of volume preserving diffeomorphisms $\textup{SDiff}(\R^3)$. Arnold~\cite{A66} has shown that the Euler equation has the structure of an infinite dimensional Hamiltonian system with configuration space $\textup{SDiff}(\R^3)$.
Moreover, the Hamiltonian function, $\int_{\mathbb{R}^3}\vv<u,u>\,dx\,/2$, is invariant under the relabeling symmetry, which is given by composition from the right in $\textup{SDiff}(\R^3)$.
Thus Noether's theorem applies, and yields
\begin{equation}\label{1e:cons}
    u_t
    = \Ad(g_t^{-1})^{\top}u_0
\end{equation}
where the transpose adjoint action, $\Ad(\cdot)^{\top}$, is defined as follows: For a vector field, $X$, let $PX = X - \nabla\Delta^{-1}\textup{div}\,X$ be the Leray-Hodge projection onto the divergence free part. Then  $\Ad(h)^{\top}v = P\cdot(Th)^{\top}\cdot(v\circ h)$ where $h\in\textup{SDiff}(\R^3)$, $(Th)^{\top}$ is the transpose matrix, and $v$ is a divergence free vector field. 
In fact, $\Ad(h)^{\top}$ is the transpose with respect to the $L^2$ inner product to the adjoint action (inverse vector field pullback) $\Ad(h): v\mapsto Th\cdot(v\circ h^{-1})$.
We remark that $Th = (\del_i h^j)_{j,i}$ will throughout refer to differentiation in the space variable, while time differentiation will be denoted by $\frac{\del}{\del t}$ (ordinary), $\ds$ (Stratonovich), or $\di$ (Ito calculus). 

The transport equation~\eqref{1e:cons} and the identity $\textup{curl}\,\Ad(g^{-1})^{\top} = \Ad(g)\,\textup{curl}$ yield
\[
 \mathcal{H}_t
 = \int_{\mathbb{R}^3} \Big\langle
  \Ad(g_t^{-1})^{\top}u_0 , \textup{curl}\,\Ad(g_t^{-1})^{\top}u_0
  \Big\rangle \,dx
  = \int_{\mathbb{R}^3} \Big\langle
  u_0 , \Ad(g_t^{-1})\,\textup{curl}\,\Ad(g_t^{-1})^{\top}u_0
  \Big\rangle \,dx
  = 
  \mathcal{H}_0. 
\]
Hence Euler flow conserves helicity (\cite{M61,M69}).

In the viscous case, $\nu>0$, helicity is generally not conserved. The precise mechanism whereby helicity changes under Navier-Stokes flow is subject to ongoing investigation  (\cite{Scheeler,LRS15,Kerr15,Kerr17}).  

This note studies helicity in the viscous case from the point of view of stochastic Hamiltonian interacting particle systems (SHIPS) as in \cite{H17,H18,H20}. These systems can be viewed as a stochastic perturbation (along Hamiltonian vector fields) of ideal fluid mechanics. Ideal fluid mechanics (Euler flow) preserves energy and helicity. Both are quadratic invariants, but they have different geometric origins. Energy conservation follows because the Hamiltonian coincides with the energy functional, but such a quantity is generally not preserved under stochastic Hamiltonian perturbations. On the other hand, helicity is a Casimir function (constant on coadjoint orbits), and Casimirs are preserved by stochastic Hamiltonian perturbations. Hence the SHIPS approach can be expected to possess a helicity type invariant. 

In \cite{H18} it is shown that solutions of incompressible Navier-Stokes equation can be obtained as the mean field limit of these interacting particle systems, and one may wonder how the helicity preserving stochastic Hamiltonian construction gives rise to a flow with non-constant helicity. 
The observation of this paper is that helicity in viscous fluids may be considered as an average over cross-helicities of stochastically perturbed ideal flows. Moreover, the group structure in $\textup{SDiff}(\R^{3})$ allows to identify this as the cross-helicity of initial vorticity and an average over backward-forward transports of the initial velocity.

Concretely, and to describe the stochastic Hamiltonian equations in question, fix a (large) integer $N$ and consider for, $\alpha = 1,\dots,N$, the IPS
\begin{align}
\label{1e:syst}
    (\ds g_t^{\alpha})\circ (g_t^{\alpha})^{-1}
    &= \frac{1}{N}\sum_{\beta=1}^N u_t^{\beta} \ds t + \sqrt{2\nu}\ds W_t^{\alpha} 
    ,\qquad g_0^{\alpha} = e
    , \qquad
    u_t^{\alpha}
    = \Ad\Big( (g_t^{\alpha})^{-1} \Big)^{\top} u_0 
\end{align}
where $\ds$ denotes Stratonovich differentiation and $(W^{\alpha})$ is a sequence of $N$ mutually independent Brownian motions in $\R^3$.  
The process $g_t^{\alpha}$ takes values in the group of volume preserving diffeomorphisms, $\textup{SDiff}(\R^3)$, and $e$ is the identity diffeomorphism. 
The system~\eqref{1e:syst} is presented in Section~\ref{sec:ships} from the Lie-Poisson point of view. 

The process $u_t^{\alpha}$ takes values (by construction) in the space of divergence free vector fields, $\SVect(\R^3)$. Further, it depends on $N$ and we can consider the mean field limit, $u_t^{\infty} = \lim_{N\to\infty}u_t^{\alpha}$, which is a limit in probability (\cite{Oel84,DV95,JW17}). In fact, since all particles are identical, it suffices to consider the limit for $\alpha=1$. 
Theorem~\ref{thm:prel}, which is a summary of \cite{H18}, shows that a given solution, $u_t$, to the Navier-Stokes equation can be represented as $u_t = E[u_t^{\infty}] = \lim_{N\to\infty}\sum_{\alpha=1}^N u_t^{\alpha}/N$. 
Therefore, and also by analogy to the ideal fluid case \eqref{1e:cons}, it makes sense to call $\Ad( (g_t^{\alpha})^{-1} )^{\top} u_0$ the forward transport of $u_0$. That is, the initial condition is transported forward in time along the stochastic Lagrangian path $g_t^{\alpha}$.  
The IPS~\eqref{1e:syst} arises from a decomposition of each infinitesimally small blob of fluid (at each $x\in \R^3$) into $N$ identical sub-blobs, and insisting that the sub-blobs follow their common center of mass while at the same time undergoing each their own stochastic process. See Section~\ref{sec:2phys}.

Fix an index $\alpha$ and consider the fluid collection that is made up of all $\alpha$-sub-blobs. In Section~\ref{sec:hel-rep-ships} it is observed that helicity is indeed preserved along the corresponding stochastic flow~\eqref{1e:syst}, that is 
\[
    \int_{\mathbb{R}^3} \vv<u_t^{\alpha},\textup{curl}\,u_t^{\alpha}>\,dx
    = \mathcal{H}_0
\]
for all $\alpha=1,\dots,N$.
This yields the representation
\begin{equation}
    %\label{1e:hel_thm}
    \tag{Theorem~\ref{thm:hel}}
 \mathcal{H}_t
 =
  \lim_{N\to\infty}\sum_{1\le\alpha\neq\beta\le N}\int_{\mathbb{R}^3}
    \Big\langle
     \Ad(g_t^{\alpha})^{\top} \Ad((g_t^{\beta})^{-1})^{\top} \,u_0 ,
    \textup{curl}\,u_0 
    \Big\rangle\,dx\, / N^2 
 %=
 %\lim_{N\to\infty}\sum_{1\le\alpha\neq\beta\le N}\int_{\mathbb{R}^3}
 %   \Big\langle
 %   u_0,
 %    (g_t^{\beta})^* ((g_t^{\alpha})^{-1})^*\textup{curl}\,u_0 
 %   \Big\rangle\,dx \, / N^2
 %=
 %\lim_{N\to\infty}\sum_{1\le\alpha\neq\beta\le N}\int_{\mathbb{R}^3}
 %   \Big\langle
 %    \Ad(g_t^{\beta})^{\top} \Ad((g_t^{\alpha})^{-1})^{\top} u_0 ,
 %   \textup{curl}\,u_0 
 %   \Big\rangle\,dx / N^2
    %\mathcal{H}_t
    %= \int_{\mathbb{R}^3}\vv<v_t, \textup{curl}\,u_0>\,dx
\end{equation}
for the helicity~\eqref{1e:hel} of Navier-Stokes flow (see also Remark~\ref{rem:hel}).
Hence the initial velocity, $u_0$, is transported forward  along a stochastic Lagrangian path, $g_t^{\beta}$, and then backwards along another path, $g_t^{\alpha}$. 
The result, $\mathcal{H}_t$, is obtained by averaging over the $L^2$ inner products of the initial vorticity and all such backward-forward transports with $1\le\alpha\neq\beta\le N$, and letting $N$ tend to infinity. 
Helicity at time $t$ is thus the average over all possible linkings of integral curves, corresponding to $\textup{curl}\,\Ad(g_t^{\alpha})^{\top} \Ad((g_t^{\beta})^{-1})^{\top} \,u_0$, and initial vortex lines, corresponding to $\textup{curl}\,u_0$. See also Section~\ref{sec:3phys}. 

If we set $\nu=0$, the Lagrangian paths $g_t^{\alpha}$ are deterministic and coincide with each other, whence it follows that 
%$(g_t^{\beta})^* ((g_t^{\alpha})^{-1})^*\textup{curl}\,u_0 = \textup{curl}\,u_0$ and
$\Ad(g_t^{\alpha})^{\top} \Ad((g_t^{\beta})^{-1})^{\top} \,u_0 = u_0$ and
Theorem~\ref{thm:hel}
reduces to $\mathcal{H}_t = \int_{\mathbb{R}^3}\vv<u_0,\textup{curl}\,u_0>\,dx = \mathcal{H}_0$,   
which is the conservation of helicity in the inviscid case. 

Consider the SDE that follows from \eqref{1e:syst} in the mean field limit as $N\to\infty$. To obtain this limit we may fix $\alpha=1$ since all interacting particles (i.e., sub-blobs) are identical. The result is the stochastic mean field system
\begin{equation}
    \label{1e:syst_mf}
    (\ds g_t^{\infty})\circ(g_t^{\infty})^{-1}
    = E[u_t^{\infty}]\ds t + \sqrt{2\nu}\ds W_t,
    \qquad
    g_0^{\infty} = e,
    \qquad 
    u_t^{\infty} = \Ad\Big( (g_t^{\infty})^{-1} \Big)^{\top}u_0 
\end{equation}
for processes $g_t^{\infty}$ in $\textup{SDiff}(\R^3)$ and $u_t^{\infty}$ in $\SVect(\R^3)$, and where $W$ is Brownian motion in $\R^3$. It follows that $u_t = E[u_t^{\infty}]$ satisfies the Navier-Stokes equation (Theorem~\ref{thm:prel}).

The IPS approach in \cite{H17,H18,H20} is a Hamiltonian analogue of the Constantin and Iyer~\cite{CI05} representation of solutions to the Navier-Stokes equation via the stochastic Weber formula. In fact, \eqref{1e:syst_mf} is equivalent to \cite{H18} via the stochastic Noether theorem (cf.\ Theorem~\ref{thm:prel}), and coincides, up to notation, with the stochastic Weber formula of \cite[Theorem~2.2]{CI05}. In the present context the $\Ad(\cdot)^{\top}$ notation is kept because the group theoretic formulation is helpful in the helicity calculations. 
 
The system~\eqref{1e:syst_mf} leads to an expression for the mean field limit in %equation~\eqref{1e:hel_thm}, 
Theorem~\ref{thm:hel},
which is 
\begin{equation}
    %\label{1e:hel_thm_mf}
    \tag{Theorem~\ref{thm:hel-mf}}
    \mathcal{H}_t
    =
    E\Big[
     \int_{\mathbb{R}^3}\vv< \Ad(h_t)^{\top}u_0, \,\textup{curl}\,u_0 >\, dx 
    \Big]
\end{equation}
where the process $h_t$ in $\textup{SDiff}(\R^3)$ is the solution to the SDE with random coefficients
\[
 (\ds h_t)\circ h_t^{-1}
 = \sqrt{2\nu}\, (Tg_t^{\infty})^{-1}\cdot \ds(B_t - W_t),
 \qquad 
 h_0 = e
\] 
and where $B$ is a Brownian motion in $\R^3$ which is independent of $W$. 
Section~\ref{sec:3phys} contains a physical interpretation of these equations.

These results depend on the existence of the mean field limits under consideration. The existence of these limits is assumed, at least for a short period of time $[0,T]$, but not proven (in this paper).  Constantin and Iyer~\cite{CI05} have shown short time existence for \eqref{1e:syst_mf}.

\section{Stochastic Hamiltonian interacting particle system (SHIPS) and mean-field limit}\label{sec:ships}
The stochastic Hamiltonian approach that is presented in this section has been developed in \cite{H17,H18,H20}. 
This approach is a Hamiltonian analogy to the mean field Weber formula theory of Constantin and Iyer~\cite{CI05}, and it is also related to Holm's variational principle for stochastic fluid mechanics (\cite{Holm15}). 
Section~\ref{sec:vor} contains the mean field evolution equation for stochastic vorticity, which is a straightforward consequence of \cite{H18} but has not been presented in this form elsewhere. A brief explanation of the physical picture underlying the mean field approach is given in Section~\ref{sec:2phys}.

\subsection{Diffeomorphism groups}\label{sec:diffgps}
We fix $s>5/2$ and let $\textup{SDiff}({\mathbb{R}^3})$ denote the infinite dimensional $\cinf$-manifold of volume preserving $H^s$-diffeomorphisms on ${\mathbb{R}^3}$. 
This space is a topological group, but not a Lie group since left composition is only continuous but not smooth. Right composition is smooth.
Let 
\[
 \gu = \SVect({\mathbb{R}^3})
\]
be the space of divergence free vector fields on ${\mathbb{R}^3}$ of class $H^s$.
The tangent space of $\textup{SDiff}({\mathbb{R}^3})$ at the identity $e$ consists of divergence free and compactly supported vector fields, denoted by
\[
 T_e \textup{SDiff}({\mathbb{R}^3})
 = \gu_0
 = \SVect({\mathbb{R}^3})_{\textup{cp}}.
\]
We use right multiplication $R^g: \textup{SDiff}({\mathbb{R}^3})\to \textup{SDiff}({\mathbb{R}^3})$, $k\mapsto k\circ g = kg$ to trivialize the tangent bundle $T\textup{SDiff}({\mathbb{R}^3})\cong \textup{SDiff}({\mathbb{R}^3})\times\gu_0$, $v_g\mapsto(g,(TR^g)^{-1} v_g)$.

The $L^2$ scalar product $\ww<.,.>$ on $\gu_0$ is defined by
\[
  \ww<v,w> 
  = \int_{\mathbb{R}^3}\vv<v(x),w(x)>\, dx
\]
for $v,w\in\gu_0$, where $dx$ is the standard volume element in ${\mathbb{R}^3}$, and $\vv<.,.>$ is the Euclidean inner product.  
Via $R^g$ this can be extended to a right invariant Riemannian metric on $\textup{SDiff}({\mathbb{R}^3})$. 
See \cite{AK98,EM70,MEF,Michor06}.

%%%%%%%%%%%%%%%%%%%%%%%%%%%%
%%%%%%%%%%%%%%%%%%%%%%%%%%%%
%%%%%%%%%%%%%%%%%%%%%%%%%%%%
\begin{comment}
\subsection{Derivatives}
The adjoint with respect to $\ww<.,.>$ to the Lie derivative $L$, given by $L_X Y = \nabla_X Y - \nabla_Y X$, is 
\begin{equation}
    L^{\top}_X Y
    = 
    -\nabla_X Y - \textup{div}(X)Y - (\nabla^{\top}X)Y
\end{equation}
with
$
    \nabla_X Y 
    = \vv<X,\nabla>Y 
    = \sum X^i\del_i Y^j e_j 
$
and
$
    (\nabla^{\top}X)Y
    = \sum (\del_i X^j)Y^j e_i
$
with respect to the standard basis $e_i$, $i = 1,\dots,n$.
The notation $\ad(X)Y = [X,Y] = -L_X Y$ and $\ad(X)^{\bot} = -L^{\top}_X$ will be used. 
\end{comment} 
%%%%%%%%%%%%%%%%%%%%%%%%%%%%
%%%%%%%%%%%%%%%%%%%%%%%%%%%%
%%%%%%%%%%%%%%%%%%%%%%%%%%%%

\subsection{Phase space}\label{sec:PS}
The configuration space of incompressible fluid mechanics on ${\mathbb{R}^3}$ is $\textup{SDiff}({\mathbb{R}^3})$. 
The corresponding phase space is trivialized via right multiplication as
\[
 T^*\textup{SDiff}({\mathbb{R}^3})  \cong \textup{SDiff}({\mathbb{R}^3}) \times   \gu^*
\]
where $\gu^*$ is defined as $\gu^* = \Om^1(\mathbb{R}^3)/d\F(\mathbb{R}^3)$.
Here $\Om^k(\mathbb{R}^3)$ are $k$-forms (of class $H^s$), $\mathcal{F}(\mathbb{R}^3)$ are functions (of class $H^{s+1}$), and $\Om^1(\mathbb{R}^3)/d\F(\mathbb{R}^3)$ is the space of equivalence classes modulo exact one-forms. 
Elements in $\gu^*$ will thus be denoted by $\cl{\xi}$ where $\xi\in\Om^1(\mathbb{R}^3)$ is a representative of the class in $\gu^* = \Om^1(\mathbb{R}^3)/d\F(\mathbb{R}^3)$. 

Let $\flat: \Vect(\mathbb{R}^3)\to\Om^1(\mathbb{R}^3)$, $X\mapsto X^{\flat}$ be the metric (musical) isomorphism with inverse $\flat^{-1}=\sharp$. Let $P: \Vect(\mathbb{R}^3)\to\SVect(\mathbb{R}^3)$, $X\mapsto X-\nabla\Delta^{-1}\textup{div}(X)$ be the Hodge projection onto divergence free vector fields. Then we obtain an isomorphism $\mu: \gu\to\gu^*$, $X\mapsto\cl{X^{\flat}}$ with inverse $\mu^{-1}: \gu^*\to\gu$, $\cl{\xi}\mapsto P \xi^\sharp$.

\begin{remark}\label{rem:no-iso}
The restriction of $\mu$ to $\gu_0$ does not induce an isomorphism of $T\textup{SDiff}(\mathbb{R}^3)\cong\textup{SDiff}(\mathbb{R}^3)\times\gu_0$ and $T^*\textup{SDiff}(\mathbb{R}^3)\cong \textup{SDiff}(\mathbb{R}^3) \times \gu^*$ since $\mu^{-1}(\cl{\xi})$ need not be compactly supported.
\end{remark}

\subsection{SHIPS}
In \cite{H18} a stochastic Hamiltonian interacting particle system is constructed which yields, in the mean field limit, the solution to the incompressible Navier-Stokes equation. Let $N$ be the number of interacting particles (or interacting blobs of fluid). The approach is 
Hamiltonian and the relevant phase space is
\[
 \mathcal{P}^N
 = 
 \Big( \textup{SDiff}(\mathbb{R}^3)\times\gu^*\Big)^N.
\]
This space is equipped with the direct product symplectic structure obtained from the canonical symplectic form on each copy  $\textup{SDiff}(\mathbb{R}^3)\times\gu^*$. 
The direct product group $\textup{SDiff}(\mathbb{R}^3)^N$ acts on $\mathcal{P}^N$ through the product action and there is a corresponding momentum map
\begin{equation}
    \label{e:momap}
    J: \mathcal{P}^N\to(\gu^*)^N,\qquad
    \Big(g^{\alpha},\cl{\xi^{\alpha}}\Big)_{\alpha=1}^N
    \mapsto 
    \Big(
    \Ad(g^{\alpha})^*\cl{\xi^{\alpha}} \Big)_{\alpha=1}^N
\end{equation}
where the coadjoint representation is determined by 
$\vv<\Ad(g)^*\cl{\xi}, X> = \int_{\mathbb{R}^3}\vv<\xi,\Ad(g)X>\,dx$ 
and $\Ad(g)X = Tg\cdot(X\circ g^{-1}) = (g^{-1})^*X$ 
for $g\in\textup{SDiff}(\R^3)$,  $\cl{\xi}\in\gu^*$ and $X\in\gu_0$. 
That is, 
\[
 \Ad\Big(g^{\alpha}\Big)^*\bigcl{\xi^{\alpha}} 
 = \bigcl{(g^{\alpha})^*\xi^{\alpha}}
 = \bigcl{ (\xi^{\alpha}\circ g^{\alpha})\cdot Tg^{\alpha} }
 .
\]
The infinitesimal adjoint representation is given by $\ad(X).Y = [X,Y] = -L_X Y$ where $L_X Y = \nabla_X Y - \nabla_Y X$ is the Lie derivative. The corresponding coadjoint representation $\ad(X)^*: \gu^*\to\gu^*$ is characterized by $\ww<\ad(X)^*\cl{\xi},Y> = \ww<P\xi^{\sharp}, [X,Y]>$. Thus, $\ad(X)^*\cl{\xi} = \cl{L_X\xi}$ where $L_X\xi$ is the Lie derivative of a one-form.

Let $e_1$, $e_2$, $e_3$ be the standard basis vectors in $\R^3$. In the following, the vectors $e_j$ will be viewed as constant  vector fields on $\mathbb{R}^3$. 
Let $(\Om,\F,(\F_t)_{t\in[0,T]},\mathbb{P})$ be a filtered probability space satisfying the usual assumptions as specified in \cite{Pro}. All stochastic processes shall be understood to be adapted to this filtration. For $\alpha=1,\dots,N$ consider a sequence of mutually independent Brownian motions $W^{\alpha} = \sum W^{j,\alpha}e_j$ in $\R^3$.  
The Stratonovich differential will be denoted by $\ds$. 

The equations of motion for a path $(g_t^{\alpha},\cl{\xi_t^{\alpha}})_{\alpha=1}^N$ in $\mathcal{P}^N$ are given by the system of Stratonovich SDEs  (see \cite[Equ.~(2.11)-(2.12)]{H18}):
\begin{align}
    \label{e:eom1}
    \ds g_t^{\alpha}
    &= TR^{g_t^{\alpha}}\Big(\frac{1}{N}\sum_{\beta=1}^N P(\xi_t^{\beta})^{\sharp} \ds t
            + \eps\sum_{j=1}^3 e_j\ds W^{j,\alpha} \Big),
    \qquad 
    g_0^{\alpha} = e \\
\label{e:eom2}
    \ds\bigcl{\xi_t^{\alpha}} 
    &=
    -\ad\Big(\frac{1}{N}\sum_{\beta=1}^N P(\xi_t^{\beta})^{\sharp} \Big)^*\bigcl{\xi_t^{\alpha}} \ds t
    - \eps\sum_{j=1}^3\ad\Big(e_j \Big)^*\bigcl{\xi_t^{\alpha}} \ds W^{j,\alpha},
    \qquad
    \xi_0^{\alpha} = u_0^{\flat}
\end{align}
where $\ds$ indicates Stratonovich differentiation, $\eps>0$ is a constant, $e$ is the identity diffeomorphism and $u_0\in\gu$ is a smooth deterministic and divergence free vector field. This system depends on the empirical average $\frac{1}{N}\sum_{\beta=1}^N P(\xi_t^{\beta})^{\sharp}$ and is therefore an interacting particle system. In the following we assume that the mean field limit of this IPS exists such that the limit in probability,
$
 \lim_{N\to\infty} \frac{1}{N}\sum_{\beta=1}^N P(\xi_t^{\beta})^{\sharp}
$,
is a deterministic time-dependent vector field and satisfies the desired initial condition.  
Moreover, for each $\alpha$, the process $\cl{\xi_t^{\alpha}}$ converges, as $N\to\infty$, to a stochastic process $\cl{\xi_t}$. Since the particles (i.e., fluid blobs) are identical, it suffices to consider $\cl{\xi_t^1}$, that is $\cl{\xi_t} = \lim_{N\to\infty}\cl{\xi^{1}_t}$. It follows that 
\begin{equation}
    \lim_{N\to\infty} \frac{1}{N}\sum_{\beta=1}^N P(\xi_t^{\beta})^{\sharp}
    = P\, E[\xi_t]^{\sharp}
    =: u_t.
\end{equation}

\begin{remark}\label{rem:N}
The process $(g_t^{\alpha},\cl{\xi_t^{\alpha}})$ depends on $N$. It would thus be more concise to write 
\[
 (g_t^{\alpha,N},\cl{\xi_t^{\alpha,N}})
\]
such that $\cl{\xi_t} = \lim_{N\to\infty}\cl{\xi^{1,N}_t}$.
However, to make the notation more readable the superscript $N$ is omitted, but it is always tacitly implied. A solution to \eqref{e:eom1}-\eqref{e:eom2} will mean a  strong solution on an interval $[0,T]$ independent of $N$ and such that the mean field limit exists. 
See \cite{Oel84,AD95,DV95,JW17} for background on mean field SDEs.
\end{remark}

\begin{remark}\label{rem:weak-sp}
The canonical symplectic form on $\textup{SDiff}(\mathbb{R}^3)\times\gu^*$ is only weakly symplectic. This follows from Remark~\ref{rem:no-iso} and implies that the induced homomorphism $T(\textup{SDiff}(\mathbb{R}^3)\times\gu^*)\to T^*(\textup{SDiff}(\mathbb{R}^3)\times\gu^*)$ is only injective but not surjective. Hence the Hamiltonian vector field does not exist for all functions on $\textup{SDiff}(\mathbb{R}^3)\times\gu^*$. 
In fact, equations~\eqref{e:eom1}-\eqref{e:eom2} do not arise from a Hamiltonian vector field since neither $\frac{1}{N}\sum_{\beta=1}^N P(\xi_t^{\beta})^{\sharp}$ nor $e_j$ are compactly supported. 
Not even the initial condition, $u_0$, is assumed to have compact support.
In \cite{H18} this problem was circumvented by taking the torus (which is compact) as the fluid's domain.
Therefore, in the present context, the Hamiltonian approach can be used only as a guiding principle. This means that, if $(g_t^{\alpha},\cl{\xi_t^{\alpha}})$ is a solution to \eqref{e:eom1}-\eqref{e:eom2}, the (Hamiltonian) conclusion $J(g_t^{\alpha},\cl{\xi_t^{\alpha}}) = J(g_0^{\alpha},\cl{\xi_0^{\alpha}})$ has to be proved directly.
Equation~\eqref{e:eom2} is a stochastic Euler equation, and so
restricting $P(\xi_t^{\alpha})^{\sharp}$ to be of compact support does not seem to be reasonable since solutions of the (deterministic) Euler equation are generally not expected to be compactly supported (\cite{CLV19}). 
\end{remark}

\begin{theorem}[\cite{H18}]\label{thm:prel}
Consider a solution $(g_t^{\alpha},\cl{\xi_t^{\alpha}})$ of \eqref{e:eom1}-\eqref{e:eom2}.
Then:
\begin{enumerate}
    \item 
    The equations for the stochastic mean field limit of the interacting particle system \eqref{e:eom1}-\eqref{e:eom2} are
    \begin{align}
    \label{e:mf-eom1}
    \ds g_t
    &= TR^{g_t}\Big( u_t \ds t
            + \eps\sum_{j=1}^3 e_j\ds W^j \Big),
    \qquad 
    g_0 = e \\
\label{e:mf-eom2}
    \ds\bigcl{\xi_t} 
    &=
    -\ad\Big( u_t\ds t + \eps\sum_{j=1}^3 e_j \ds W_t^j \Big)^*\bigcl{\xi_t} 
    \qquad
    \xi_0 = u_0^{\flat}
\end{align}
where $(g_t,\xi_t) = \lim_{N\to\infty}(g_t^1,\xi_t^1)$ and $u_t = E[P\xi_t^{\sharp}]$. 
    \item 
    $u_t
    = \lim_{N\to\infty} \frac{1}{N}\sum_{\beta=1}^N P(\xi_t^{\beta})^{\sharp}
    = E[P\xi_t^{\sharp}]$ solves the incompressible Navier-Stokes equation 
    \begin{equation}
        \label{e:NS}
             \by{\del}{\del t}u = -\nabla_u u - \nabla p + \nu\Delta u, \qquad \textup{div}\,u = 0
    \end{equation}
    where $p$ is the pressure and $\nu = \eps^2/2$.
    Conversely, if $u_t$ satisfies \eqref{e:NS} and $g_t$ is defined by \eqref{e:mf-eom1}, then $\cl{\xi_t} = \Ad(g_t^{-1})^*\cl{u_0^{\flat}}$ is a solution to \eqref{e:mf-eom2} and $E[P\xi_t^{\sharp}] = u_t$.
    \item
    Assume $(g_t^{\alpha})$ satisfies \eqref{e:eom1}.
    Then \eqref{e:eom2} holds if, and only if,
    $\Ad(g_t^{\alpha})^*\cl{\xi_t^{\alpha}} 
    = \cl{(g_t^{\alpha})^*\xi_t^{\alpha}} = \cl{u_0^{\flat}}$ for all $\alpha=1,\dots,N$.
    In the limit, $N\to\infty$, this implies $\Ad(g_t)^*\cl{\xi_t} = \cl{u_0^{\flat}}$. 
\end{enumerate}
\end{theorem}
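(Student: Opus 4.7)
The plan is to handle the three parts in order, exploiting the Lie--Poisson structure and propagation of chaos.

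For part (1), I would use the standard mean-field argument: the coupling in \eqref{e:eom1}--\eqref{e:eom2} enters only through the empirical average $N^{-1}\sum_\beta P(\xi_t^\beta)^\sharp$, and since the Brownian motions $W^\alpha$ are mutually independent with statistically identical particles, exchangeability together with the assumed existence of the mean-field limit and standard propagation-of-chaos arguments forces this empirical average to converge in probability to $u_t := E[P\xi_t^\sharp]$. Substituting into \eqref{e:eom1}--\eqref{e:eom2} and restricting to the $\alpha=1$ marginal yields \eqref{e:mf-eom1}--\eqref{e:mf-eom2}.

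For part (3), I would compute directly by the Stratonovich chain rule. Setting $Y_t^\alpha := N^{-1}\sum_\beta P(\xi_t^\beta)^\sharp\,\ds t + \eps\sum_j e_j\,\ds W^{j,\alpha}$, so that \eqref{e:eom1} reads $(\ds g_t^\alpha)\circ(g_t^\alpha)^{-1} = Y_t^\alpha$, the identity $\frac{d}{dt}\Ad(g_t)^* = \Ad(g_t)^*\circ\ad((\dot g_t)g_t^{-1})^*$ (which lifts to Stratonovich differentials by the ordinary chain rule) gives
\[
 \ds \Ad(g_t^\alpha)^*\cl{\xi_t^\alpha}
 = \Ad(g_t^\alpha)^*\Big(\ad(Y_t^\alpha)^*\cl{\xi_t^\alpha} + \ds\cl{\xi_t^\alpha}\Big).
\]
The right-hand side vanishes precisely when \eqref{e:eom2} holds, so $\Ad(g_t^\alpha)^*\cl{\xi_t^\alpha}$ is constant in time (equal to $\cl{u_0^\flat}$) if and only if \eqref{e:eom2} is satisfied; passing to $N\to\infty$ transfers the statement to the mean-field limit.

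For part (2), the main content, I would convert \eqref{e:mf-eom2} from Stratonovich to Ito. Since $\cl{\xi}\mapsto -\eps\,\ad(e_j)^*\cl{\xi}$ is linear and the $e_j$ are constant, the Stratonovich-to-Ito correction is $\tfrac{1}{2}\eps^2\sum_j \ad(e_j)^*\ad(e_j)^*\cl{\xi_t}\,dt$; using $\ad(e_j)^*\cl{\xi} = \cl{L_{e_j}\xi} = \cl{\del_j\xi}$ and summing produces $\cl{\Delta\xi_t}$. Taking expectation kills the martingale, so with $\cl{\bar{\xi}_t} := E[\cl{\xi_t}]$ and $u_t = P\bar{\xi}_t^\sharp$,
\[
 \frac{\partial}{\partial t}\cl{\bar{\xi}_t}
 = -\ad(u_t)^*\cl{\bar{\xi}_t} + \nu\,\cl{\Delta\bar{\xi}_t},
 \qquad \nu = \eps^2/2.
\]
Writing $\bar{\xi}_t^\sharp = u_t + \nabla q_t$ and applying $L_u u^\flat = \nabla_u u^\flat + \tfrac{1}{2}d|u|^2$ together with $L_u(dq) = d(u\cdot\nabla q)$, the Lie-derivative term projects under $P\cdot^\sharp$ to $\nabla_{u_t}u_t + \nabla p$, while $P(\Delta\bar{\xi}_t^\sharp) = \Delta u_t$ on divergence-free fields; together this is \eqref{e:NS}. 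The converse follows by defining $\cl{\xi_t} := \Ad(g_t^{-1})^*\cl{u_0^\flat}$, verifying \eqref{e:mf-eom2} via the computation of part (3), and invoking local uniqueness of Navier--Stokes to conclude $E[P\xi_t^\sharp] = u_t$. The main obstacle is the bookkeeping here: the class-level Lie-derivative and Laplacian identities, and the commutation $[P,\Delta]=0$ on non-compactly supported representatives, must be verified directly on decaying test fields rather than inherited from formal Hamiltonian reduction, as emphasized in Remark~\ref{rem:weak-sp}.
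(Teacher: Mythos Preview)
Your proposal is correct and aligns with the paper's approach. The paper in fact defers items (1) and (2) entirely to the cited reference \cite{H18} and only proves (3) directly here; for (3) it uses the Stratonovich pullback identity $\ds (g_t^{\alpha})^*\sigma = (g_t^{\alpha})^* L_{(TR^{g_t^{\alpha}})^{-1}\delta_t g_t^{\alpha}} \sigma$ paired against compactly supported test fields $X\in\gu_0$, which is exactly your chain-rule computation $\ds\Ad(g_t^\alpha)^*\cl{\xi_t^\alpha} = \Ad(g_t^\alpha)^*(\ad(Y_t^\alpha)^*\cl{\xi_t^\alpha} + \ds\cl{\xi_t^\alpha})$ written in the form language rather than the abstract group language.
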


\begin{proof}
These assertions are shown in \cite{H18} where item~(3) follows because \eqref{e:eom1}-\eqref{e:eom2} constitute a right invariant Hamiltonian system whence the momentum map \eqref{e:momap} is constant along solutions. In the present context (Remark~\ref{rem:weak-sp}) item~(3) is shown directly:\\
For an arbitrary $k$-form $\sigma$ we have the identity
$\ds (g_t^{\alpha})^*\sigma = (g_t^{\alpha})^* L_{(TR^{g_t^{\alpha}})^{-1}\delta_t g_t^{\alpha}} \sigma$.
Now,
$\Ad(g_t^{\alpha})^*\cl{\xi_t^{\alpha}} 
    = \cl{(g_t^{\alpha})^*\xi_t^{\alpha}} = \cl{u_0^{\flat}}$ for all $\alpha=1,\dots,N$ holds if, and only if,
\begin{align*}
    \ds \int_{\mathbb{R}^3}
    \vv< (g_t^{\alpha})^*\xi_t^{\alpha} , X >\, dx
    &=
    \int_{\mathbb{R}^3}
    \vv< 
    (g_t^{\alpha})^* L_{(TR^{g_t^{\alpha}})^{-1}\delta_t g_t^{\alpha}}  \xi_t^{\alpha} , X >\, dx
    +
    \int_{\mathbb{R}^3}
    \vv< (g_t^{\alpha})^*\ds\xi_t^{\alpha} , X >\, dx
    = 0
\end{align*}
for all $X\in\gu_0$.
Because of \eqref{e:eom1} the assertion follows. 
\end{proof}

\subsection{Vorticity formulation}\label{sec:vor}
The  system \eqref{e:eom1}-\eqref{e:eom2} is a stochastic version of ideal incompressible flow. Consequently, the corresponding vorticity may be expected to be transported along the stochastic flow. In this section it is shown that this is indeed the case. The vorticity, $\om=\om(\cl{\xi})$, associated to an element $\cl{\xi}\in\gu^*$ is defined as
\[
 \om = d\xi \in\C^2\subset\Om^2(\mathbb{R}^3)
\]
where $\C^2$ denotes the space of closed two-forms.
If $X = \mu^{-1}\cl{\xi} = P\xi^{\sharp}$ and $*$ is the Hodge star operator, then we have $(*\om)^{\sharp} = \nabla\times X$, which is the expression of the vorticity when considered as a vector field. We thus obtain an isomorphism, $\cl{\xi}\mapsto\om(\cl{\xi})$, from $\gu^*$ to $\C^2$. The induced coadjoint action on $\C^2$ is given by pullback, that is $\Ad(g)^*\om = g^*\om = (\om\circ g)\cdot\Lam^2 Tg$. The infinitesimal coadjoint action is given by the Lie derivative, $\ad(X)^*\om = L_X\om$.

Let $\cl{\xi_t^{\alpha}}$, for $\alpha=1,\dots,N$, be a solution to \eqref{e:eom2}.
The Maurer-Cartan formula, $L_X = di_X+i_Xd$, then implies that the vorticity, $\om^{\alpha}_t = d\xi^{\alpha}_t$, satisfies
\begin{align}
    \label{e:vor1}
    \ds\om^{\alpha}
    = 
    -\ad\Big(u^{(N)}\,\delta_t t +\eps\sum_{j=1}^3 e_j \ds W^{j,\alpha} \Big)^*\om^{\alpha}
    =
    -L_{u^{(N)}}\om^{\alpha}\ds t
    -\eps \sum_{j=1}^3 L_{e_j}\om^{\alpha}\ds W^{j,\alpha} 
\end{align}
where 
\[
 u^{(N)}
 = \sum_{\beta=1}^N P(\xi^{\beta})^{\sharp}/N
 = \sum_{\beta=1}^N BS \Big(( *\om^{\beta} )^{\sharp}\Big)/N ,
\]
and $( *\om^{\alpha} )^{\sharp}$ is the divergence free vector field associated to $\om^{\alpha}$ and $BS$ is the Biot-Savart operator. 
The latter is defined as 
\begin{equation}
    \label{e:BS}
    BS(w)(x)
    = \frac{1}{4\pi}\int_{\mathbb{R}^3}\frac{w(y)\times(x-y)}{(x-y)^3}\,dy
\end{equation}
for $w\in\gu=\SVect(\R^3)$.
Closed two-forms and divergence free vector fields are in one-to-one correspondence via $\om\mapsto(*\om)^{\sharp}$. 
This can be used to define $BS^*: \C^2\to\Om^1(\mathbb{R}^3)$, $\om\mapsto (BS((*\om)^{\sharp}) )^{\flat}$, which satisfies $BS^* d \xi = \xi$.

Applying $BS^*$ to \eqref{e:vor1} yields equation~\eqref{e:eom2}. Hence these equations are equivalent, and the former is the vorticity formulation of the latter. Furthermore, the constancy of the momentum map~\eqref{e:momap}, i.e.\ Theorem~\ref{thm:prel}(3),  along solutions implies that the vorticity is transported along the stochastic flow:
\begin{equation}
    \label{e:vor-trn}
    \Ad(g_t^{\alpha})^*\om_t^{\alpha} = \om_0
\end{equation}
for all $\alpha=1,\dots,N$. Note that the initial conditions are assumed to be independent of $\alpha$, $\om_0^{\alpha} = \om_0 = du_0^{\flat}$ and $g_0^{\alpha}=e$. 

Under the assumption that the mean field limit exists, we consider $\om_t = \lim_{N\to\infty}\om_t^1$. It follows that
\begin{align}
    \label{e:vor-mf}
    \ds \om_t
    &= -\ad\Big(u_t \ds t + \eps\sum_{j=1}^3e_j\ds W^j\Big)^*\om_t\\ 
    \notag
    u_t
    &=
    \Big(BS^*\Big(E[ \om_t]\Big)\Big)^{\sharp}
    =
    \lim_{N\to\infty} \Big(BS^*\Big(\sum_{\alpha=1}^N \om_t^{\alpha}/N \Big)\Big)^{\sharp}
\end{align}
which is a mean field SDE because the drift depends on the expectation. 
If $g_t = \lim g_t^1$ with $g_0=e$, this can be restated as $\om_t = \Ad(g_t^{-1})^*\om_0$.
Moreover, since $\cl{BS^*\om_t}$ satisfies \eqref{e:mf-eom2},
Theorem~\ref{thm:prel} implies that $u$ is a solution to the incompressible Navier-Stokes equation~\eqref{e:NS}.

\subsection{Physical interpretation of SHIPS}\label{sec:2phys}
The picture underlying the IPS \eqref{e:eom1}-\eqref{e:eom2} is that each infinitesimal blob of fluid is divided into $N$ identical sub-blobs. These sub-blobs interact to follow their common center of mass and, at the same time, each undergo their own Brownian motion. The barycentric component of the motion  is due to the $\sum_{\beta=1}^NP(\xi_t^{\beta})^{\sharp}/N$ part of the equation, while the stochastic perturbation is encoded in $\eps\sum e_j\ds W_t^{j,\alpha}$. 

In \cite{H18} the equations \eqref{e:eom1}-\eqref{e:eom2} are given the structure of a stochastic Hamiltonian system. In fact, \cite{H18} treats the case where the fluid's domain is a torus and the perturbation vectors are given by a certain infinite sequence of divergence free vector fields. These are then interpreted as the velocities of molecules which impart their momenta on the sub-blobs. In the present context, because helicity is usually defined on simply-connected domains, the domain is chosen to be $\R^3$. Thus the stochastic perturbation, $\eps\sum e_j\ds W_t^{j,\alpha}$, is interpreted as a model for the combined effect of molecules of different momenta hitting the sub-blob indexed by $\alpha$. Due to the non-compactness of $\R^3$, the Hamiltonian interpretation encounters structural difficulties (Remark~\ref{rem:weak-sp}). Nevertheless, the central (and only) conclusion from the Hamiltonian approach, namely that the momentum map \eqref{e:momap} is constant along solutions, still holds. In fact, Theorem~\ref{thm:prel}(3) shows that \eqref{e:eom2} is equivalent to the preservation of the momentum map. 

Each infinitesimal element, $dx$, is subdivided into a partition of $N$ identical $dx^{\alpha}$, and the initial conditions in each $dx^{\alpha}$ are given by $(g_0^{\alpha}, \cl{\xi_0^{\alpha}}) = (e, \cl{u_0^{\flat}})$, independently of $\alpha$.
As this subdivision is carried out simultaneously for all infinitesimal elements $dx$ in the domain, this may also be seen as $N$ copies of the domain at each point in time. Thus for each $t\in[0,T]$ we have a system of $N$ interacting stochastic fluid states $(g_t^{\alpha},\cl{\xi_t^{\alpha}})$, and, as $N$ becomes large, their average velocity converges to the macroscopically observed state $u_t$, which is the solution to the Navier-Stokes equation~\eqref{e:NS}.  

The construction is Hamiltonian (with the caveat mentioned in Remark~\ref{rem:weak-sp}) and therefore the state of each copy, $(g_t^{\alpha},\xi_t^{\alpha})$, evolves according to a stochastic Hamiltonian system. This does, in general, not mean that energy is conserved (and it is not for the case at hand as discussed in \cite{H18}). However, coadjoint orbits are conserved by stochastic Hamiltonian mechanics when the phase space is the dual of a Lie algebra. Via $d: \gu^*\to\C^2$ the coadjoint orbits in $\gu^*$ are isomorphic to sets of the form $\{\Ad(g)^*\om: g\in\textup{SDiff}(\mathbb{R}^3)\}$. Hence, for each $\alpha$, vorticity is transported along flow lines, i.e.\ \eqref{e:vor-trn} holds.

%%%%%%%%%%%%%%%%%%%%%%%%%%%%%%%%%%%%%%
%%%%%%%%%%%%%%%%%%%%%%%%%%%%%%%%%%%%%%
%%%%%%%%%%%%%%%%%%%%%%%%%%%%%%%%%%%%%%
%%%%%%%%%%%%%%%%%%%%%%%%%%%%%%%%%%%%%%
%%%%%%%%%%%%%%%%%%%%%%%%%%%%%%%%%%%%%%
%%%%%%%%%%%%%%%%%%%%%%%%%%%%%%%%%%%%%%
%%%%%%%%%%%%%%%%%%%%%%%%%%%%%%%%%%%%%%
%%%%%%%%%%%%%%%%%%%%%%%%%%%%%%%%%%%%%%
%%%%%%%%%%%%%%%%%%%%%%%%%%%%%%%%%%%%%%
%%%%%%%%%%%%%%%%%%%%%%%%%%%%%%%%%%%%%%
\section{Helicity representation}\label{sec:hel-rep}
Let $u = u(t,x)$, with $t\in[0,T]$ and $x\in\R^3$, be a solution to the incompressible Navier-Stokes equation. 
The helicity at time $t\in[0,T]$ is 
\begin{equation}
    \label{e:hel}
    \mathcal{H}_t
    = \int_{\mathbb{R}^3}\vv<u_t,\textup{curl}\,u_t>\,dx
    = \int_{\mathbb{R}^3} u_t^{\flat}\wedge\om_t
    = \int_{\mathbb{R}^3} BS^*(\om_t)\wedge\om_t
\end{equation}
where the vorticity, $\om_t = du_t^{\flat}$, and Biot-Savart operator $BS^*$ have been defined in Section~\ref{sec:vor}.  
The physical meaning of the following calculations is discussed in Section~\ref{sec:3phys}.

\subsection{SHIPS representation}\label{sec:hel-rep-ships}
Note that $d: \gu^*\to\mathcal{C}$ has the equivariance property $d\circ\Ad(g)^*=\Ad(g)^*\circ d$, for all $g\in\textup{SDiff}(\mathbb{R}^3)$. 
Define the map
\[
 I: \mathcal{C}^2\to\R,
 \qquad
 \om \mapsto \int_{\mathbb{R}^3} BS^*(\om)\wedge\om.
\]
It follows that $I(\Ad(g)^*\om) = I(g^*\om) = I(\om)$ for all $g\in\textup{SDiff}(\mathbb{R}^3)$. 
Let $(g_t^{\alpha},\om_t^{\alpha})_{\alpha=1}^N$ be a solution to the system \eqref{e:eom1}, \eqref{e:vor1}. Equation~\eqref{e:vor-trn}, which expresses the observation that the vorticity $\om_t^{\alpha}$ is transported along the stochastic flow $g_t^{\alpha}$, implies that 
\begin{align}
    \sum_{\alpha=1}^N I(\Ad(g_t^{\alpha})^*\om_t^{\alpha})
    =
    \sum_{\alpha=1}^N I(\om_0)
    = 
    N\mathcal{H}_0. 
\end{align}

Consider the mean field limit \eqref{e:vor-mf} of the system \eqref{e:eom1}, \eqref{e:vor1}. Hence the (time-dependent and deterministic) vector field $u$, defined in \eqref{e:vor-mf}, satisfies the incompressible Navier-Stokes equation.
Conversely, every solution $u$ can be represented in this manner (Theorem~\ref{thm:prel}).

Define 
\begin{equation}
    \label{e:lambda}
    \lambda_t
    = \lim_{N\to\infty}\sum_{1\le\alpha\neq\beta\le N}
        \Ad\Big( 
            (g_t^{\alpha})^{-1}g_t^{\beta}
        \Big)^* \om_0 / N^2
\end{equation}
where the limit is taken in probability. 

\begin{theorem}\label{thm:hel}
The helicity \eqref{e:hel} of $u$ satisfies
\[
 \mathcal{H}_t
 = 
 \int_{\mathbb{R}^3} BS^*(\lambda_t)\wedge\om_0.
\]
\end{theorem}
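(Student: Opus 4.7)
The approach is to expand the helicity using the mean-field decomposition of the vorticity and to exploit the transport identity $(g_t^{\alpha})^{*}\om_t^{\alpha}=\om_0$ from Theorem~\ref{thm:prel}(3) together with the volume preservation of each $g_t^{\alpha}$. Since $\om_t=\lim_{N\to\infty}N^{-1}\sum_{\alpha=1}^{N}\om_t^{\alpha}$ in probability by Section~\ref{sec:vor}, continuity of $BS^{*}$ and bilinearity of the wedge pairing give
\[
 \mathcal{H}_t
 = \int_{\R^3}BS^{*}(\om_t)\wedge\om_t
 = \lim_{N\to\infty}\frac{1}{N^2}\sum_{\alpha,\beta=1}^{N}\int_{\R^3}BS^{*}(\om_t^{\alpha})\wedge\om_t^{\beta}.
\]
The diagonal $\alpha=\beta$ terms each equal $I(\om_t^{\alpha})=I(\Ad((g_t^{\alpha})^{-1})^{*}\om_0)=I(\om_0)=\mathcal{H}_0$ by the $\Ad^{*}$-invariance of $I$ recalled in Section~\ref{sec:hel-rep-ships}, so their total contribution is $\mathcal{H}_0/N\to 0$.

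For fixed $\alpha\neq\beta$ the key step is to pull both factors back by $g_t^{\beta}\in\textup{SDiff}(\R^3)$. Since $g_t^{\beta}$ is volume preserving the integral of any top form is invariant, so using $(g_t^{\beta})^{*}\om_t^{\beta}=\om_0$,
\[
 \int_{\R^3}BS^{*}(\om_t^{\alpha})\wedge\om_t^{\beta}
 =\int_{\R^3}(g_t^{\beta})^{*}BS^{*}(\om_t^{\alpha})\wedge(g_t^{\beta})^{*}\om_t^{\beta}
 =\int_{\R^3}(g_t^{\beta})^{*}BS^{*}(\om_t^{\alpha})\wedge\om_0.
\]
Both $(g_t^{\beta})^{*}BS^{*}(\om_t^{\alpha})$ and $BS^{*}\bigl(((g_t^{\alpha})^{-1}g_t^{\beta})^{*}\om_0\bigr)$ are primitives of the same closed two-form $(g_t^{\beta})^{*}\om_t^{\alpha}=((g_t^{\alpha})^{-1}g_t^{\beta})^{*}\om_0$ (because $d$ commutes with pullback and $dBS^{*}=\id$ on $\C^2$); on $\R^3$ they therefore differ by an exact 1-form $df$, and $\int df\wedge\om_0=\int d(f\om_0)=0$ by Stokes together with $d\om_0=0$ and decay at infinity. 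Hence
\[
 \int_{\R^3}BS^{*}(\om_t^{\alpha})\wedge\om_t^{\beta}
 =\int_{\R^3}BS^{*}\bigl(\Ad((g_t^{\alpha})^{-1}g_t^{\beta})^{*}\om_0\bigr)\wedge\om_0,
\]
and summing over $\alpha\neq\beta$, dividing by $N^2$ and letting $N\to\infty$ identifies the right-hand side as $\int_{\R^3}BS^{*}(\lambda_t)\wedge\om_0$ by the definition~\eqref{e:lambda} of $\lambda_t$ and linearity of $BS^{*}$.

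The main technical obstacle is justifying the exchanges of the mean-field limit with the non-local operator $BS^{*}$ and with the bilinear pairing $\int(\cdot)\wedge\om_0$; this requires that the convergence $N^{-1}\sum_{\alpha}\om_t^{\alpha}\to\om_t$ takes place in a norm strong enough for both operations to be continuous, consistent with the $H^{s}$ setup of Section~\ref{sec:diffgps}. A secondary subtlety, recurring whenever Stokes is invoked to kill an exact part $df$, is controlling the decay of $f\om_0$ at infinity; this is inherited from the assumed decay of $u_0$ and from the fact that each $g_t^{\alpha}$ is an $H^{s}$-perturbation of the identity, so that pullbacks of sufficiently decaying forms remain decaying.
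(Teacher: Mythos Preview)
Your proof is correct and follows essentially the same route as the paper's: expand the helicity via the mean-field vorticity, discard the diagonal contribution $\mathcal{H}_0/N\to 0$, pull back by $g_t^{\beta}$ to turn $\om_t^{\beta}$ into $\om_0$, and then replace $(g_t^{\beta})^{*}BS^{*}(\om_t^{\alpha})$ by $BS^{*}\bigl(((g_t^{\alpha})^{-1}g_t^{\beta})^{*}\om_0\bigr)$. The only cosmetic difference is that the paper records this last step simply as ``$d(\Ad(g_t^{\beta})^{*}BS^{*}\om_t^{\alpha}-BS^{*}\Ad(g_t^{\beta})^{*}\om_t^{\alpha})=0$'', whereas you spell out the consequence on $\R^3$ (closed $\Rightarrow$ exact, then Stokes with $d\om_0=0$); your added remarks on the analytic caveats (continuity of $BS^{*}$ and of the bilinear pairing under the mean-field limit, decay for Stokes) are points the paper leaves implicit.
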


\begin{proof}
Indeed, the mean field vorticity formulation \eqref{e:vor-mf} yields
\begin{align*}
    \mathcal{H}_t
    &= 
    \int_{\mathbb{R}^3} BS^*(du_t^{\flat})\wedge du_t^{\flat}
    = 
    \int_{\mathbb{R}^3} BS^*\Big(\lim_{N\to\infty}\sum_{\alpha}\om_t^{\alpha}/N\Big)\wedge \lim_{N\to\infty}\sum_{\beta}\om_t^{\beta}/N \\
    &=
    \lim_{N\to\infty}\Big(
        \underbrace{\sum_{\alpha}\int_{\mathbb{R}^3} BS^*(\om_t^{\alpha})\wedge \om_t^{\alpha}/N^2}_{\mathcal{H}_0/N\longto0}
        +
        \sum_{\alpha\neq\beta}
            \int_{\mathbb{R}^3} BS^*(\om_t^{\alpha})\wedge \om_t^{\beta}/N^2
    \Big)\\
    &=
    \lim_{N\to\infty}
    \sum_{\alpha\neq\beta}\int_{\mathbb{R}^3}
        (g_t^{\beta})^*\,(BS^*(\om_t^{\alpha})\wedge \om_t^{\beta})/N^2\\
    &= 
    \int_{\mathbb{R}^3} BS^*(\lambda_t)\wedge\om_0
\end{align*}
where $\alpha$ and $\beta$ range from $1$ to $N$, and we have used the invariance property of $I$ together with \eqref{e:vor-trn}, which implies
\[
 \Big(g_t^{\beta}\Big)^*\Big(BS^*(\om_t^{\alpha})\wedge \om_t^{\beta}\Big)
 =
 \Big((g_t^{\beta})^*BS^*(\om_t^{\alpha})\Big)\wedge \Big(g_t^{\beta}\Big)^*\om_t^{\beta}
 = 
 \Big(\Ad(g_t^{\beta})^*BS^*(\Ad((g_t^{\alpha})^{-1})^*\om_0)\Big)\wedge \om_0
\]
and 
$\int_{\mathbb{R}^3}(\Ad(g_t^{\beta})^*BS^*(\Ad((g_t^{\alpha})^{-1})^*\om_0))\wedge \om_0 = \int_{\mathbb{R}^3}(BS^*(\Ad(g_t^{\beta})^*\Ad((g_t^{\alpha})^{-1})^*\om_0))\wedge \om_0$;
the last equality holds because 
$d( \Ad(g_t^{\beta})^*BS^*\om_t^{\alpha} - BS^*\Ad(g_t^{\beta})^*\om_t^{\alpha}) = 0$.
\end{proof}

\begin{remark}\label{rem:hel}
The initial vorticity, $\om_0$, is related to the curl of the initial velocity, $u_0$, as $(*\om_0)^{\sharp} = (*du_0^{\flat})^{\sharp} = \textup{curl}\,u_0$.
Further, we have 
$\int_{\mathbb{R}^3}\vv<(*\om_0)^{\sharp},u_0>\,dx 
= \int_{\mathbb{R}^3} \om_0\wedge u_0^{\flat} 
= \int_{\mathbb{R}^3} \vv< \textup{curl}\,u_0, u_0>\,dx
$.
If $X$ is a divergence free vector field its pullback, 
$h^*X = Th^{-1}\cdot(X\circ h) = \Ad(h^{-1})X$, by $h\in\textup{SDiff}(\R^3)$ is again divergence free. 
Hence Theorem~\ref{thm:hel} can be reformulated as
\begin{align*}
 \mathcal{H}_t
 &=
 \int_{\mathbb{R}^3}\Big(BS(*\lam_t)^{\sharp}\Big)\wedge u_0^{\flat}
 =
 \int_{\mathbb{R}^3}\Big\langle BS(*\lam_t)^{\sharp}, \textup{curl}\,u_0\Big\rangle\,dx
 = 
 \int_{\mathbb{R}^3}(**\lam_t)\wedge u_0^{\flat} 
 \\
 &=
 \lim_{N\to\infty}\sum_{1\le\alpha\neq\beta\le N}\int_{\mathbb{R}^3}
    \Ad(g_t^{\beta})^* \Ad((g_t^{\alpha})^{-1})^*\om_0\wedge u_0^{\flat}
    / N^2 \\
 &=
 \lim_{N\to\infty}\sum_{1\le\alpha\neq\beta\le N}\int_{\mathbb{R}^3}
 \Big\langle
    \Ad((g_t^{\beta})^{-1}) \Ad(g_t^{\alpha})\,\textup{curl}\, u_0, 
    u_0 
 \Big\rangle\,dx
    / N^2 \\
 &=
 \lim_{N\to\infty}\sum_{1\le\alpha\neq\beta\le N}\int_{\mathbb{R}^3}
    \Big\langle
     \Ad(g_t^{\alpha})^{\top} \Ad((g_t^{\beta})^{-1})^{\top} \,u_0 ,
    \textup{curl}\,u_0 
    \Big\rangle\,dx / N^2 .
\end{align*}
\end{remark}

\subsection{Mean field limit}\label{sec:hel-rep-mf}
To find the evolution equation characterizing the limit $\lam_t$ defined in \eqref{e:lambda}, set $h_t^{\beta,\alpha} = (g_t^{\beta})^{-1} g_t^{\alpha}$ and
\begin{equation}
    \eta_t^{\beta,\alpha}
    = \Ad\Big( (g_t^{\alpha})^{-1}g_t^{\beta} \Big)^*\om_0 
    = \Ad\Big( (h_t^{\beta,\alpha})^{-1}\Big)^*\om_0.
\end{equation}
The product formula for Stratonovich equations implies that the process $h_t^{\beta,\alpha}$ in $\textup{SDiff}(\R^3)$ is the solution  to
\begin{align}
 \notag 
    \ds h_t^{\beta,\alpha}
    &=
    TR^{h_t^{\beta,\alpha}}\cdot
    \Big( 
    -(Tg_t^{\beta})^{-1}
    \underbrace{(\ds g_t^{\beta})(g_t^{\beta})^{-1}}_{\eqref{e:eom1}}
    \underbrace{g_t^{\alpha}(h_t^{\beta,\alpha})^{-1}}_{g_t^{\beta}}
    + 
    (Tg_t^{\beta})^{-1}
    \underbrace{(\ds g_t^{\alpha})(g_t^{\alpha})^{-1}}_{\eqref{e:eom1}}
    g_t^{\beta} 
 \Big) \\
\label{e:h^ab}
  &= 
  \eps \, TR^{h_t^{\beta,\alpha}}\cdot
  \Big( \Ad(g_t^{\beta})^{-1} \ds \hat{W}^{\alpha,\beta}
  \Big) \\
  \notag
  h_0^{\beta,\alpha} 
  &= e
\end{align}
where $\hat{W}^{\alpha,\beta} = \sum e_j(W^{j,\alpha}-W^{j,\beta})$ is a difference of two independent Brownian motions. It is therefore, up to a multiplicative factor, again a Brownian motion with quadratic variation $[\hat{W}^{\alpha,\beta},\hat{W}^{\alpha,\beta}]_t = [W^{\alpha},W^{\alpha}]_t + [W^{\beta},W^{\beta}]_t = 2t$.

Equation~\eqref{e:h^ab} implies that $\eta_t^{\beta,\alpha}$ satisfies
\begin{align}
\label{e:etaSDE}
 \ds\eta_t^{\beta,\alpha}
 = 
 %-\ad\Big( \ds h_t^{\beta,\alpha}\cdot (h_t^{\beta,\alpha})^{-1} \Big)^*
 %   \eta_t^{\beta,\alpha} \\
 %\notag 
 %&=
 %-\ad\Big( 
 %   -(Tg_t^{\beta})^{-1}
 %   \underbrace{(\ds g_t^{\beta})(g_t^{\beta})^{-1}}_{\eqref{e:eom1}}
 %   \underbrace{g_t^{\alpha}(h_t^{\beta,\alpha})^{-1}}_{g_t^{\beta}}
 %   + 
 %   (Tg_t^{\beta})^{-1}
 %   \underbrace{(\ds g_t^{\alpha})(g_t^{\alpha})^{-1}}_{\eqref{e:eom1}}
 %   g_t^{\beta} 
 %\Big)^* \eta_t^{\beta,\alpha} \\
 %\notag 
 %&=
 -\eps\,
 \ad\Big( \Ad(g_t^{\beta})^{-1} \ds \hat{W}^{\alpha,\beta}
 \Big)^* \eta_t^{\beta,\alpha},
 \qquad
 \eta_0^{\beta,\alpha}  = \om_0.
\end{align}
%where $\hat{W}^{\alpha,\beta} = \sum e_j(W^{j,\alpha}-W^{j,\beta})$ is a difference of two independent Brownian motions. It is therefore again a Brownian motion with variation $[\hat{W}^{\alpha,\beta},\hat{W}^{\alpha,\beta}]_t = [W^{\alpha},W^{\alpha}]_t + [W^{\beta},W^{\beta}]_t = 2t$.

Equations~\eqref{e:h^ab} and \eqref{e:etaSDE} are SDEs with random coefficients, since there is a dependence on realizations of $g_t^{\beta}$. But neither depend on $g_t^{\alpha}$ for $\alpha\neq\beta$. 
Therefore, for all $N$ and all $\beta\le N$, the sequence $\eta_t^{\beta,\alpha}$ with  $\alpha=1,\ldots,\hat{\beta},\ldots,N$ ($\beta$ omitted) is i.i.d., and we have that 
\begin{equation}\label{e:iid}
 \sum_{\alpha\neq\beta, \alpha=1}^N\eta_t^{\beta,\alpha}
 \sim 
 (N-1)\eta_t^{\beta,\beta+1}
\end{equation}
where $\sim$ means equivalence in distribution. (For $\beta= N$ the expression $\eta^{\beta,\beta+1}$ does not make sense, thus one should write, e.g., $\sum_{\alpha=1}^{N-1}\eta_t^{N,\alpha} \sim  (N-1)\eta_t^{N,N-1}$ for this case. However, below we will only need the case $\beta=1$ and so the inconsistency at $\beta=N$ will be ignored from now on.) 

Recall from Remark~\ref{rem:N} that $g_t^{\beta}$ depends on $N$. 
Let $N$ go to infinity and assume $g_t = \lim_{N\to\infty} g_t^{\beta}$  is the stochastic mean field limit (for an arbitrarily fixed $\beta$, e.g.\ $\beta=1$) of the IPS \eqref{e:eom1}, given by the mean field SDE~\eqref{e:mf-eom1}
%\begin{equation}
%    \label{e:eom1mf}
%    \ds g_t 
%    = TR^{g_t}\Big( u_t\ds t + \ds W \Big),
%    \quad 
%    g_0 = e
%\end{equation}
and where the driving Brownian motion is denoted by $W$.
 
Equations~\eqref{e:h^ab} and \eqref{e:etaSDE} imply, respectively,
that $h_t = \lim_{N\to\infty}h_t^{\beta,\beta+1}$ satisfies 
\begin{equation}
    \label{e:h}
    \ds h_t 
    =   
    \eps \, TR^{h_t}\cdot
    \Big( \Ad(g_t)^{-1} \ds \hat{W}
      \Big) ,
  \qquad
  h_0
  = e
\end{equation}
and
that $\eta_t = \Ad(h_t^{-1})^*\om_0 = \lim_{N\to\infty}\eta_t^{\beta,\beta+1}$ satisfies 
\begin{equation}
    \label{e:etaSDE2}
    \ds \eta_t 
    = -\eps\, \ad\Big(\Ad(g_t)^{-1}\ds \hat{W}\Big)^*\eta_t,
    \quad 
    \eta_0 = \om_0
\end{equation}
where $\hat{W} = B-W$ and $B$ is a Brownian motion in $\R^3$ independent of $W$. 

Now, to find the evolution equation for \eqref{e:lambda}, note that \eqref{e:iid} yields
\begin{align}
\label{e:lam}
    \lambda_t
    %&= \lim_{N\to\infty}\sum_{1\le\alpha\neq\beta\le N}\eta_t^{\beta,\alpha}/N^2
    = 
    \lim_{N\to\infty}\sum_{\beta=1}^N\sum_{\alpha\neq\beta, \alpha=1}^N\eta_t^{\beta,\alpha}/N^2%\\
    %&
    = 
    \lim_{N\to\infty}\sum_{\beta=1}^N(N-1)\eta_t^{\beta,\beta+1}/N^2
    =
    \lim_{N\to\infty}\sum_{\beta=1}^N \eta_t^{\beta,\beta+1}/N 
    = E\Big[\eta_t\Big].
\end{align}
Here we use the propagation of chaos property of mean field limits (\cite{Oel84,JW17}) which implies independence of $g_t^{\beta}$ in the limit as $N\to\infty$, such that the $\eta_t^{\beta,\beta+1}$ are asymptotically i.i.d.

However, because \eqref{e:etaSDE2} is an SDE with random coefficients, it does not have the Markov property and one cannot expect the evolution of $\lambda_t = E[\eta_t]$ to be given by a deterministic PDE. 

\begin{theorem}
\label{thm:hel-mf}
The helicity \eqref{e:hel} satisfies 
\begin{align*}
    \mathcal{H}_t 
    = 
    \int_{\mathbb{R}^3} BS^*(E[\eta_t])\wedge\om_0 
    =
    E\Big[\int_{\mathbb{R}^3} \Big(\Ad(h_t^{-1})^*\om_0\Big) \wedge u_0^{\flat}\Big] 
    = 
    E\Big[\int_{\mathbb{R}^3}
        \vv< \Ad(h_t)^{\top}u_0, \,\textup{curl}\,u_0>\,dx \Big].
\end{align*}
Moreover, 
\begin{equation}
    \by{\del}{\del t} E [\eta_t]
    = \eps^2 E\Big[
            \Ad(g_t)^*\Delta \Ad(g_t^{-1})^*\eta_t 
        \Big]
\end{equation}
where $\Delta = (d+*d*)^2$ is the Laplacian. 
\end{theorem}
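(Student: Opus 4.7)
The three equalities in the first display are formal manipulations, which I would carry out in order. From Theorem~\ref{thm:hel} and the identification $\lambda_t = E[\eta_t]$ in~\eqref{e:lam}, the first equality is immediate, and linearity of expectation rewrites it as $E[\int_{\R^3} BS^*(\eta_t)\wedge\om_0]$. For the second equality I would use the self-adjointness of $BS^*$ on closed two-forms, $\int BS^*(\om)\wedge\om' = \int \om\wedge BS^*(\om')$ modulo a boundary term that vanishes under decay at infinity (the same integration-by-parts already used at the end of the proof of Theorem~\ref{thm:hel}), combined with $BS^*\om_0 = u_0^\flat$ (modulo exact) and $\eta_t = (h_t^{-1})^*\om_0$. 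For the third equality I would change variables using the volume preservation of $h_t$, $\int (h_t^{-1})^*\om_0\wedge u_0^\flat = \int \om_0\wedge h_t^* u_0^\flat = \int \vv<\textup{curl}\,u_0,(h_t^* u_0^\flat)^\sharp>\,dx$; then identify $(h_t^* u_0^\flat)^\sharp = (Th_t)^\top (u_0\circ h_t)$, which differs from $\Ad(h_t)^\top u_0 = P(Th_t)^\top(u_0\circ h_t)$ only by a gradient, and this gradient is annihilated in the $L^2$ pairing by the divergence-free field $\textup{curl}\,u_0$.

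The PDE is the substantive claim. My plan is to convert the Stratonovich SDE~\eqref{e:etaSDE2} to Ito form. Writing $\ds\eta_t = -\eps\sum_j L_{\Ad(g_t)^{-1}e_j}\eta_t\,\ds\hat{W}^j$, the Stratonovich-to-Ito correction is $\frac{1}{2}\sum_j d[-\eps L_{\Ad(g_\cdot)^{-1}e_j}\eta_\cdot,\hat{W}^j]_t$. To evaluate the cross-variation cleanly, I would introduce $\tilde\eta_t := \Ad(g_t^{-1})^*\eta_t$; using the product formula for Stratonovich SDEs on $\textup{SDiff}(\R^3)$ one finds $\tilde\eta_t = (k_t^{-1})^*\om_0$ with $k_t := g_t\circ h_t$, and $k_t$ itself satisfies $\ds k_t = u_t(k_t)\ds t + \eps\ds B_t$ because the cross-terms $\eps\ds W$ and $\eps\ds\hat{W}$ recombine to $\eps\ds B$. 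Thus $\tilde\eta_t$ is a functional of $B$ alone and is independent of $g_t$. Using the naturality $L_{\Ad(g_t)^{-1}e_j}\eta_t = g_t^* L_{e_j}\tilde\eta_t$, and separating the $dW$-contribution that comes from $g_t^*$ from the $dB$-contribution that comes from $\tilde\eta_t$, a direct calculation gives $d[\cdot,B^j]_t = \eps^2 g_t^* L_{e_j}^2\tilde\eta_t\,dt$ and $d[\cdot,W^j]_t = -\eps^2 g_t^* L_{e_j}^2\tilde\eta_t\,dt$, so that $d[\cdot,\hat{W}^j]_t = 2\eps^2 g_t^* L_{e_j}^2\tilde\eta_t\,dt$.

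Halving and taking expectation (which kills the Ito martingale) then yields $\frac{d}{dt}E[\eta_t] = \eps^2 \sum_j E[g_t^* L_{e_j}^2\tilde\eta_t]$. Applying $\sum_j L_{e_j}^2 = \Delta$ (the $e_j$ being the constant coordinate frame) and reinserting $g_t^* = \Ad(g_t)^*$ together with $\tilde\eta_t = \Ad(g_t^{-1})^*\eta_t$ produces the stated PDE.

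The main obstacles I expect are analytic rather than algebraic: one needs enough decay of $\eta_t$, $\Ad(h_t)^\top u_0$ and their derivatives to justify the boundary-vanishing integration-by-parts underlying the self-adjointness of $BS^*$, and integrability conditions to commute expectation with the unbounded Lie-derivative and Laplacian operators in the Ito step. These are regularity issues of exactly the sort flagged in Remarks~\ref{rem:weak-sp} and~\ref{rem:N}, to be handled under the short-time smoothness assumption maintained throughout the paper.
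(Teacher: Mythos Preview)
Your proposal is correct and follows essentially the same route as the paper: the first display is obtained from Theorem~\ref{thm:hel} together with \eqref{e:lam} and the same change-of-variables/integration-by-parts identities, and the evolution equation is derived by converting \eqref{e:etaSDE2} to It\^o form via the intermediate process $\tilde\eta_t=\Ad(g_t^{-1})^*\eta_t$, which is exactly the paper's \eqref{e:pf3}. Your observation that $k_t=g_t\circ h_t$ satisfies $\ds k_t = u_t(k_t)\,\ds t+\eps\,\ds B_t$ is a clean geometric way to see why $\tilde\eta_t$ is driven by $B$ alone, but the resulting quadratic-variation computation and the identification $\sum_j L_{e_j}^2=\Delta$ coincide with the paper's argument.
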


\begin{proof}
The first part follows from \eqref{e:lam} and Theorem~\ref{thm:hel}, and because
$\int_{\mathbb{R}^3} (\Ad(h_t^{-1})^*\om_0) \wedge u_0^{\flat}
=
\int_{\mathbb{R}^3} *\,(\textup{curl}\,u_0)^{\flat} \wedge \Ad(h_t)^* u_0^{\flat}
=
\int_{\mathbb{R}^3} \vv<\Ad(h_t)\,\textup{curl}\,u_0 , u_0>\,dx
$.
For the second statement, it remains to transform \eqref{e:etaSDE2} into Ito form. 
By definition of the Stratonovich integral (see \cite{Pro}) it follows that
\begin{align*}
%\label{e:pf1}
    \eta_t-\eta_0
    &= 
    -\eps\sum_{j=1}^3\int_0^t\ad\Big(\Ad(g_s)^{-1}e_j\Big)^*\eta_s\ds\hat{W}_s^{j}\\
%\label{e:pf2}
    &=
    -\eps\sum_{j=1}^3\int_0^t\ad\Big(\Ad(g_s)^{-1}e_j\Big)^*\eta_s\di\hat{W}_s^{j}
    -\frac{\eps}{2}\sum_{j=1}^3\Big[
        \ad\Big(\Ad(g_.)^{-1}e_j\Big)^*\eta_., \hat{W}_.^j 
    \Big]_t 
\end{align*}
where $\di$ indicates Ito differentiation and $[.,.]_t$ is the quadratic variation process. 
The product formula for Stratonovich SDEs applied to equations~\eqref{e:eom1} and \eqref{e:etaSDE2} implies
\begin{equation}
    \label{e:pf3}
    \ds\Big(\Ad(g_t^{-1})^*\eta_t\Big)
    = -\ad\Big(u\ds t + \eps\ds B_t\Big)^*\Ad(g_t^{-1})^*\eta_t
\end{equation}
whence
\begin{align*}
    \ds\Big( \ad(\Ad(g_t^{-1})e_j)^*\eta_t \Big)
    &=
    \ds\Big( \Ad(g_t)^*\ad(e_j)^*\Ad(g_t^{-1})^*\eta_t \Big)
    \\
    &=
    \Big(\ldots\Big)\ds t
    +
    \eps\sum_{k=1}^3\Ad(g_t)^*\ad(e_k)^*\ad(e_j)^*\Ad(g_t^{-1})^*\eta_t\ds W^k\\
    &\phantom{==}
    -
    \eps\sum_{l=1}^3\Ad(g_t)^*\ad(e_j)^*\ad(e_l)^*\Ad(g_t^{-1})^*\eta_t\ds B^k.
\end{align*}
Using \cite[Ch.~2,~Thm.~29]{Pro},
\begin{align*}
    \Big[
        \ad\Big(\Ad(g_.^{-1})e_j\Big)^*\eta_., \hat{W}_.^j 
    \Big]_t
    &=
    \eps\sum_{k=1}^3\Big[ 
        \int_0^.\Ad(g_s)^*\ad(e_k)^*\ad(e_j)^*\Ad(g_s^{-1})^*\eta_.\di W_s^k, -W^j_.
    \Big]_t\\
    &\phantom{==}
    -
    \eps\sum_{l=1}^3\Big[
        \int_0^.\Ad(g_s)^*\ad(e_j)^*\ad(e_l)^*\Ad(g_s^{-1})^*\eta_.\di B_s^l, B^j_.
    \Big] \\
    &= 
    -2\eps \int_0^t
    \Ad(g_s)^*\ad(e_j)^*\ad(e_j)^*\Ad(g_s^{-1})^*\eta_s\,ds.
\end{align*}
Therefore,
\begin{equation}
    \label{e:eta-ito}
    \di\eta_t
    = 
    \eps^2\Ad(g_t)^*\Delta\Ad(g_t^{-1})^*\eta_t\di t
    - 
    \eps\, \ad\Big(\Ad(g_t)^{-1}\di \hat{W}_t\Big)^*\eta_t
\end{equation}
and the claim follows since 
$E[\int_0^t \ad(\Ad(g_s)^{-1}\di \hat{W}_s)^*\eta_s] = 0$.
\end{proof}

\begin{remark}
Equation~\eqref{e:pf3} has the same structure and initial condition as \eqref{e:vor-mf}. But the driving Brownian motions are different, thus these equations do not imply path-wise equality of $\Ad(g_t^{-1})^*\eta_t$ and $\om_t$.  
\end{remark}

The transpose of the adjoint operator, $\ad(\cdot)^{\top}: \gu\to\gu$, is characterized by $\ww<\ad(X)^{\top}Y, Z> = \ww<Y, \ad(X)Z>$ for $X,Y\in\gu$ and $Z\in\gu_0$, and given by 
\[
 \ad(X)^{\top}Y
 = P\Big(\nabla_X Y + (\nabla^{\top} X)Y \Big)
\]
where $(\nabla^{\top} X)Y = \sum (\del_i X^j)Y^j e_i$.
Consider now a solution, $\cl{\xi_t}$,  to the mean field equation \eqref{e:mf-eom2} and let the corresponding vector field valued process be defined by $u_t^W = P\xi_t^{\sharp}$.  Then $u_t^W$ satisfies 
\[
 \ds u_t^W 
 = -\ad\Big( u_t \ds t + \eps\ds W_t \Big)^{\top}u_t^W 
 = -P\Big(\nabla_{u_t}u_t^W + (\nabla^{\top}u_t)u_t^W\Big)\ds t
  - \sum_{k=1}^3P \nabla_{e_k}u_t^W \ds W_t^k
\]
with $u_t = E[u_t^W]$ and where $W$ is the same Brownian motion as in \eqref{e:mf-eom1}. Consider furthermore $u_t^B$ defined by 
\[
 \ds u_t^B 
 = -\ad\Big( u_t \ds t + \eps\ds B_t \Big)^{\top}u_t^B 
 = -P\Big(\nabla_{u_t}u_t^B + (\nabla^{\top}u_t)u_t^B\Big)\ds t
  - \sum_{k=1}^3P \nabla_{e_k}u_t^B \ds B_t^k
\]
with $u_t = E[u_t^B] = E[u_t^W]$ and where $B$ is the same as in \eqref{e:etaSDE2}. 
Comparing equations~\eqref{e:vor-mf} and \eqref{e:pf3} then implies $\textup{curl}\,u_t^B = (*\Ad(g_t^{-1})^*\eta_t)^{\sharp}$. 
Since $\eps^2 = 2\nu$ Theorem~\ref{thm:hel-mf} implies in particular  that
\begin{align}
\label{e:inf}
 \by{\del}{\del t}\mathcal{H}_t
 &= 
    \eps^2 E\Big[ \int_{\mathbb{R}^3}
    \Big(\Ad(g_t)^*\Delta \Ad(g_t^{-1})^*\eta_t\Big)\wedge BS^* \om_0
    \Big] 
 =
    \eps^2 E\Big[ \int_{\mathbb{R}^3}
     (\Delta \underbrace{\Ad(g_t^{-1})^*\eta_t}_{*(\textup{curl}\,u_t^B)^{\flat}} )
     \wedge 
     \underbrace{\Ad(g_t^{-1})^*\xi_0}_{ (u_t^W)^{\flat} }
    \Big] \\
\notag
 &= 
 \eps^2 E\Big[ 
  \int_{\mathbb{R}^3}\vv<\textup{curl}\,\Delta\,u_t^B,  u_t^W>\,dx
 \Big] 
 =
 \eps^2  
  \int_{\mathbb{R}^3}\vv<\textup{curl}\,\Delta\, E[u_t^B],  E[u_t^W]>\,dx
 =
 2\nu 
  \int_{\mathbb{R}^3}\vv<\textup{curl}\,\Delta\,u_t,  u_t>\,dx
\end{align}
where we use $*\Delta = \Delta *$ and that $u^W$ and $u^B$ are independent.
This coincides, of course, with the result of using the Navier-Stokes equation to obtain $\by{\del}{\del t}\mathcal{H}_t$ from the definition~\eqref{e:hel}.

\subsection{Physical interpretation}\label{sec:3phys}
The proof of Theorem~\ref{thm:hel} offers two representations for the helicity $\mathcal{H}_t$: 

The first is 
\begin{equation}
    \label{e:3p1}
    \mathcal{H}_t
    = \lim_{N\to\infty}\sum_{1\le\alpha\neq\beta\le N}\int_{\mathbb{R}^3} BS^*(\om_t^{\alpha})\wedge\om_t^{\beta} / N^2 . 
\end{equation}
Each integral of the form $\int_{\mathbb{R}^3} BS^*(\om_t^{\alpha})\wedge\om_t^{\beta}$ represents the overall linking of $(*\om_t^{\alpha})^{\sharp}$-lines to $(*\om_t^{\beta})^{\sharp}$-lines (\cite{M69}). Let us refer to this quantity as simply the linking of $\om_t^{\alpha}$ and $\om_t^{\beta}$. 
Then \eqref{e:3p1} says that helicity can be interpreted as the average linking of $\om_t^{\alpha}$ to $\om_t^{\beta}$
(over all pairs $\alpha, \beta$ with $\alpha\neq\beta$). 
Because each vorticity, $\om_t^{\alpha}$, is transported along its own stochastic Lagrangian path, $g_t^{\alpha}$, the self-linking of $\om_t^{\alpha}$ remains constant. This is a consequence of \eqref{e:vor-trn}, which implies that $\int_{\mathbb{R}^3} BS^*(\om_t^{\alpha})\wedge\om_t^{\alpha} = \mathcal{H}_0$ for all $\alpha$. However, as $\om_t^{\alpha}$ and $\om_t^{\beta}$ are driven by different Brownian motions for $\alpha\neq\beta$, the linking of their respective vortex lines can change as time progresses.  
 
The second representation of helicity, compare also with Remark~\ref{rem:hel}, is 
\begin{equation}
    \label{e:3p2}
    \mathcal{H}_t
    = \int_{\mathbb{R}^3} BS^*(\lam_t)\wedge\om_0
    =
 \lim_{N\to\infty}\sum_{1\le\alpha\neq\beta\le N}\int_{\mathbb{R}^3}
    \Big\langle
         \Ad(g_t^{\alpha})^{\top}\Ad((g_t^{\beta})^{-1})^{\top}\,u_0 ,
       \,\textup{curl}\,u_0 
    %BS\Big( (g_t^{\beta})^* ((g_t^{\alpha})^{-1})^*\textup{curl}\,u_0 \Big),
    \Big\rangle\,dx \, / N^2.
\end{equation}
This arises  because of the transport property~\eqref{e:vor-trn}. Indeed, the stochastic flow, $g_t^{\beta}$, lies in the group of volume preserving diffeomorphisms, and this allows to pull-back $\om_t^{\alpha}$ along $g_t^{\beta}$. The result is \eqref{e:3p2}, which is now the average overall linking of the initial vorticity, $\om_0$, to the vorticities associated to backward-forward transports of $u_0$. 

The second interpretation leads to the mean field limit $\lam_t = E[\eta_t]$ 
This shows that $\mathcal{H}_t$ equals the expectation of the cross-helicity of the stochastically transported initial condition, $\Ad(h_t)^{\top}u_0$,  and $\textup{curl}\,u_0$.

%\textbf{Data Availability.}
%Data sharing is not applicable to this article as no new data were created or analyzed in this study.

%%%%%%%%%%%%%%%%%%%%%%%%%%%%%%%%%%%%%%
%%%%%%%%%%%%%%%%%%%%%%%%%%%%%%%%%%%%%%
%%%%%%%%%%%%%%%%%%%%%%%%%%%%%%%%%%%%%%
%%%%%%%%%%%%%%%%%%%%%%%%%%%%%%%%%%%%%%
%%%%%%%%%%%%%%%%%%%%%%%%%%%%%%%%%%%%%%
%%%%%%%%%%%%%%%%%%%%%%%%%%%%%%%%%%%%%%


\begin{thebibliography}{99}

\bibitem{AD95}
N.U. Ahmed, X. Ding,
\emph{A semilinear Mckean-Vlasov stochastic evolution equation in Hilbert space},
Stoch. Proc. Appl. \textbf{60}, Issue 1, November 1995, Pages 65-85.
\href{https://doi.org/10.1016/0304-4149(95)00050-X}{https://doi.org/10.1016/0304-4149(95)00050-X}

%\bibitem{ABHT20}
%D. Alonso-Oran, A. Bethencourt de León, D. Holm, S. Takao,
%\emph{Modeling the Climate and Weather of a 2D Lagrangian-Averaged Euler–Boussinesq %Equation with Transport Noise},
%\href{https://doi.org/10.1007/s10955-019-02443-9}{https://doi.org/10.1007/s10955-019-02443-9}

%\bibitem{AMW97}
%D.M. Anderson, G.B. McFadden, A.A. Wheeler,
%\emph{Diffuse-Interface Methods in Fluid Mechanics},
%Annual Review of Fluid Mechanics \textbf{30} Nr. 1 (1997).
%\href{https://doi.org/10.1146/annurev.fluid.30.1.139}{https://doi.org/10.1146/annurev.fluid.30.1.139}

\bibitem{A66}
V.I. Arnold,
\emph{Sur la g\'eom\'etrie diff\'erentielle de groupes de Lie de dimension infinie et ses applications \`a l'hydrodynamique des fluides parfaits},
Annales de l'Institut Fourier \textbf{16} No. 1 (1966), p. 319-361.
\href{https://doi.org/10.5802/aif.233}{https://doi.org/10.5802/aif.233}

\bibitem{AK98}
V. Arnold, B. Khesin,
\emph{Topological Methods in Hydrodynamics}, Springer 1998.

%\bibitem{BTB15}
%P. Bauer, A. Thorpe, G. Brunet,  
%\emph{The quiet revolution of numerical weather prediction}, Nature \textbf{525}, 47–55 (2015). \href{https://doi.org/10.1038/nature14956}{https://doi.org/10.1038/nature14956}

%\bibitem{Betal17}
%J. Berner, et al.,  
%\emph{Stochastic Parameterization: Toward a New View of Weather and Climate Models}, Bull. %Amer. Meteor. Soc. \textbf{98} (2017), pp. 565–588. %\href{https://doi.org/10.1175/BAMS-D-15-00268.1}{https://doi.org/10.1175/BAMS-D-15-00268.1}

%\bibitem{BDL03}
%D. Bresch, B. Desjardins, C.-K. Lin, 
%\emph{On some compressible fluid models: Korteweg, lubrication, and shallow water
%systems} Comm. Partial Differential Equations, \textbf{28}(3-4):843–868, (2003).

%\bibitem{C02}
%K.M. Clark,
%\emph{The Use of Computer Modeling in Estimating and Managing Future Catastrophe Losses},
%The Geneva Papers on Risk and Insurance. Issues and Practice
%Vol. \textbf{27}, No. 2, Special issue on risk management (2002), pp. 181-195.

\bibitem{CI05}
P. Constantin, G. Iyer, \emph{A stochastic Lagrangian representation of the three-dimensional
incompressible Navier-Stokes equations}, Comm. Pure Appl. Math. \textbf{61} (2008), no. 3, 330-345.
\href{https://doi.org/10.1002/cpa.20192}{https://doi.org/10.1002/cpa.20192}

\bibitem{CLV19}
P. Constantin, J. La, V. Vicol, 
\emph{Remarks on a paper by Gavrilov: Grad–Shafranov equations, steady solutions of the three dimensional incompressible Euler equations with compactly supported velocities, and applications},
Geom. Funct. Anal. \textbf{29}, 1773–1793 (2019). 
\href{https://doi.org/10.1007/s00039-019-00516-1}{https://doi.org/10.1007/s00039-019-00516-1}

%\bibitem{CC07}
%  F. Cipriano, A.B. Cruzeiro, 
%  \emph{Navier-Stokes equation and diffusions on the group of homeomorphisms of the torus}, Comm. Math. Phys., \textbf{275} Issue 1 (2007), pp 255-269
  
%\bibitem{CGD17}
%C.J. Cotter, G.A. Gottwald, D.D. Holm,
%\emph{Stochastic partial differential fluid equations as a diffusive %limit of deterministic Lagrangian multi-time dynamics}
%Proc. R. Soc. A \textbf{473} (2017) 20170388.

%\bibitem{CFH17}
%D. Crisan, F. Flandoli, D.D. Holm,
%\emph{Solution properties of a 3D stochastic Euler fluid equation},
%J Nonlinear Sci \textbf{29}, 813–870 (2019). %\href{https://doi.org/10.1007/s00332-018-9506-6}{https://doi.org/10.1007/%s00332-018-9506-6}

%\bibitem{CM08}
%A.B. Cruzeiro, P. Malliavin,
%\emph{Nonergodicity of Euler fluid dynamics on tori
%versus positivity of the Arnold–Ricci tensor},
%J. Functional Analysis \textbf{254} (2008) pp. 1903–1925.
%\href{https://doi.org/10.1016/j.jfa.2007.08.002}{https://doi.org/10.1016/j.jfa.2007.08.002}

%\bibitem{C}
%  A.B. Cruzeiro,
%   \emph{Hydrodynamics, probability and the geometry of the diffeomorphism %group},
%   Seminar on Stochastic Analysis, Random Fields and Applications VI pp. %83-93 (2011).
  
%\bibitem{CS09}
%    A.B. Cruzeiro, E. Shamarova,
%    \emph{Navier-Stokes equations and forward–backward SDEs on the group of %diffeomorphisms of a torus}, \textbf{119} Issue 12 (2009), pp. 4034-4060.
%    \href{https://doi.org/10.1016/j.spa.2009.09.011}{https://doi.org/10.1016/j.spa.2009.09.011}
    
\bibitem{DV95}
D. Dawson, J. Vaillancourt, 
\emph{Stochastic McKean-Vlasov equations}, NoDEA \textbf{2}, 199–229 (1995). \href{https://doi.org/10.1007/BF01295311}{https://doi.org/10.1007/BF01295311}

%\bibitem{DHL20}
%T.D. Drivas, D.D. Holm, J.-M. Leahy,
%\emph{Lagrangian averaged stochastic advection by Lie transport for fluids},
%J Stat Phys (2020). 
%\href{https://doi.org/10.1007/s10955-020-02493-4}{https://doi.org/10.1007/s10955-020-02493-4}
 
\bibitem{DZ}
  G. Da Prato, J. Zabczyk,
  \emph{Stochastic equations in infinite dimensions}, CUP, 2nd Ed. 2014. 

%\bibitem{DHOST20}
%S. Desmettre, S. Hochgerner, S. Omerovic, W. Stockinger, S. Thonhauser,
%\emph{Mean field Libor market models and valuation of long term guarantees},
%in preparation. 
  
\bibitem{EM70}
    D. Ebin, J.E.  Marsden, \emph{Groups of diffeomorphisms and the motion of an incompressible fluid}, Ann. of Math. \textbf{92}(1) (1970), pp. 1037-1041.
    \href{https://www.jstor.org/stable/1970699}{https://www.jstor.org/stable/1970699}
    
%\bibitem{FHT19}
%M. Foskett, D. Holm, C. Tronci,
%\emph{Geometry of nonadiabatic quantum hydrodynamics},
%Acta Appl. Math. (2019) 162:63-103. 

%\bibitem{GHL20}
%F. Gach, S. Hochgerner, K. Laimer,
%\emph{Estimation of future discretionary benefits for traditional life insurance %liabilities under Solvency~II},
%in preparation.

%\bibitem{GL14}
%M. Gisclon, I. Lacroix-Violet,
%\emph{About the barotropic compressible quantum Navier-Stokes equation},
%Nonlinear Analysis
%Volume \textbf{128} (2015), Pages 106-121.

%\bibitem{GK05}
%P. Grossi, H. Kunreuther (Eds.),
%\emph{Catastrophe Modeling: A New Approach to Managing Risk}, 
%Huebner International Series on Risk, Insurance and Economic Security, Springer 2005.  

%\bibitem{HR12}
%S. Hochgerner, T. Ratiu,
%\emph{Geometry of non-holonomic diffusion},
%J Euro. Math. Soc. 
%Volume \textbf{17}, Issue 2 (2015), pp. 273–319. 

\bibitem{H17}
  S. Hochgerner,
  \emph{Stochastic mean field approach to fluid dynamics},
  J Nonlinear Science \textbf{28} No. 2 (2018) pp. 725-737.
  \href{https://doi.org/10.1007/s00332-017-9425-y}{https://doi.org/10.1007/s00332-017-9425-y}

\bibitem{H18}
S. Hochgerner,
\emph{A Hamiltonian mean field system for the Navier–Stokes equation}
Proc. Royal Soc. A \textbf{474} (2018). 
\href{http://doi.org/10.1098/rspa.2018.0178}{http://doi.org/10.1098/rspa.2018.0178}

\bibitem{H20}
S. Hochgerner,
\emph{A Hamiltonian Interacting Particle System for Compressible Flow},
Water, Vol. \textbf{12}(8), 2109 (2020); \href{https://doi.org/10.3390/w12082109}{https://doi.org/10.3390/w12082109}

%\bibitem{HG19}
%S. Hochgerner, F. Gach,
%\emph{Analytical validation formulas for best estimate calculation in traditional life %insurance},   Eur. Actuar. J. \textbf{9}, pp. 423–443 (2019). 

\bibitem{Holm15}
D.D. Holm,
\emph{Variational principles for stochastic fluid dynamics}, 
Proc. Royal Soc. A \textbf{471} (2015) \href{https://doi.org/10.1098/rspa.2014.0963}{https://doi.org/10.1098/rspa.2014.0963}

%\bibitem{Hetal15}
%A. Hunter, D.B. Stephenson, T. Economou,  M. Holland,  I. Cook, \emph{New perspectives on %the collective risk of extratropical cyclones}, Q. J. R. Meteorol. Soc. %\textbf{142}:243–256 (2015).  %\href{https://doi.org/10.1002/qj.2649}{https://doi.org/10.1002/qj.2649}

%\bibitem{I06b}
%G. Iyer,
%\emph{A stochastic perturbation of inviscid flows},
%Commun. Math. Phys. \textbf{266}, No. 3, 631-645 (2006).

\bibitem{JW17}
P.E. Jabin, Z. Wang,
\emph{Mean field limit for stochastic particle systems},
In: Bellomo N., Degond P., Tadmor E. (eds) Active Particles, Volume 1. Modeling and Simulation in Science, Engineering and Technology. Birkhäuser.
\href{https://doi.org/10.1007/978-3-319-49996-3_10}{https://doi.org/10.1007/978-3-319-49996-3\_10}

%\bibitem{J11}
%A. J\"ungel,
%\emph{Effective velocity in compressible Navier–Stokes equations with
%third-order derivatives}, 
%Nonlinear Analysis \textbf{74} (2011) 2813–2818.

%\bibitem{JM12}
%A. J\"ungel, J.-P. Milisic,
%\emph{Quantum Navier-Stokes equations},
% In: G\"unther M., Bartel A., Brunk M., Sch\"ops S., Striebel M. (eds) Progress in Industrial Mathematics at ECMI 2010. Mathematics in Industry, vol 17. Springer, Berlin, Heidelberg

\bibitem{Kerr15}
R. Kerr, \emph{Simulated Navier-Stokes trefoil reconnection}, Preprint \href{https://arxiv.org/pdf/1509.03142.pdf}{arXiv:1509.03142}.

\bibitem{Kerr17}
R. Kerr,
\emph{Trefoil knot structure during reconnection}, Preprint
\href{https://arxiv.org/abs/1703.01676}{arXiv:1703.01676}. 

%\bibitem{K01}
%D.J. Korteweg,
%\emph{Sur la forme que prennent
%les \'equations du mouvements des fluides
%si l’on tient compte des forces capillaires
%caus\'ees par des variations de densit\'e consid\`erables mais continues et sur la th\'eorie de la capillarit\'e dans l’hypoth\`ese d’une variation continue de la densit\'e}, Arch. Neerl. Sci. Exactes Nat. Ser. II \textbf{6}:1–24 (1901).

%\bibitem{K68}
%R.H. Kraichnan, Small-scale structure of a scalar field convected by turbulence, Phys. %Fluids, 11 (1968), pp. 945-963. 

\bibitem{LRS15}
C. Laing, R. Ricca, D. Sumners,  
\emph{Conservation of writhe helicity under anti-parallel reconnection},
Sci Rep \textbf{5}, 9224 (2015). \href{https://doi.org/10.1038/srep09224}{https://doi.org/10.1038/srep09224}


%\bibitem{LCO08}
%J.-A. Lazaro-Cami, J.-P. Ortega,
%\emph{Stochastic Hamiltonian dynamical systems},
%Rep. Math. Phys.
%\textbf{61}, Issue 1 (2008), Pages 65-122.

\bibitem{MEF}
  J. Marsden, D. Ebin, A. Fischer,
  \emph{Diffeomorphism groups, hydrodynamics and relativity},
  Proc. of the 13th Biennial Seminar of Canadian Mathematical Congress, (J. Vanstone, ed.), (1972), 135-279

%\bibitem{MRW84}
%J. Marsden, T. Ratiu, A. Weinstein, 
%\emph{Semidirect Products and Reduction in Mechanics},
%Trans. Amer. Math. Soc. \textbf{281}(1), 147-177 (1984). 
%\href{https://www.jstor.org/stable/1999527}{doi:10.2307/1999527}

%\bibitem{M96}
%S. M\'el\'eard, 
%\emph{Asymptotic behaviour of some interacting particle systems; McKean-Vlasov and Boltzmann models} In: Talay D., Tubaro L. (eds) Probabilistic Models for Nonlinear Partial Differential Equations. Lecture Notes in Mathematics, vol 1627. Springer (1996).
%\href{https://doi.org/10.1007/BFb0093177}{https://doi.org/10.1007/BFb0093177}

%\bibitem{MV07}
%A. Mellet, A. Vasseur,
%\emph{On the Barotropic Compressible Navier–Stokes Equations},
%Communications in Partial Differential Equations \textbf{32}(3) (2007)
%\href{https://www.tandfonline.com/doi/abs/10.1080/03605300600857079}{DOI:10.1080/03605300600857079}.

%\bibitem{Mem14}
%E. M\'emin, 
%\emph{Fluid flow dynamics under location uncertainty},
%Geophysical \& Astrophysical Fluid Dynamics, \textbf{108}(2): pp. 119-146, (2014).

\bibitem{Michor06}
P. Michor, \emph{Some Geometric Evolution Equations Arising as Geodesic Equations on Groups of Diffeomorphism, Including the Hamiltonian Approach.}
  IN: Phase space analysis of Partial Differential Equations. Series: Progress in Non Linear Differential Equations and Their Applications, Vol. 69. Bove, Antonio; Colombini, Ferruccio; Santo, Daniele Del (Eds.). Birkhauser Verlag 2006. Pages 133-215. %arXiv:math/0609077. 
  \href{https://doi.org/10.1007/978-0-8176-4521-2\_11}{https://doi.org/10.1007/978-0-8176-4521-2\_11}
  
\bibitem{M69}
H.K. Moffatt, 
\emph{The degree of knottedness of tangled vortex lines}, J Fluid Mechanics \textbf{35}:117–129 (1969).
\href{https://doi.org/10.1017/jfm.2017.667}{https://doi.org/10.1017/jfm.2017.667}

\bibitem{MR92}
H.K. Moffatt, R.L. Ricca, \emph{Helicity and the Călugăreanu invariant},
Proc. Royal Soc. London A439411–429 (1992)
\href{http://doi.org/10.1098/rspa.1992.0159}{http://doi.org/10.1098/rspa.1992.0159}

\bibitem{MT92}
H.K. Moffatt and A. Tsinober,
\emph{Helicity in Laminar and Turbulent Flow},
Annual Review of Fluid Mechanics \textbf{24}:281-312 (1992) 
\href{https://doi.org/10.1146/annurev.fl.24.010192.001433}{https://doi.org/10.1146/annurev.fl.24.010192.001433}

\bibitem{M61}
J.J. Moreau, 
\emph{Constantes d’un îlot tourbillonnaire en fluide parfait barotrope}, Comptes Rendus Hebdomadaires des S\'eances de l’Acad\'emie des Sciences 1961, \textbf{252}, 2810–2812.
\href{https://hal.archives-ouvertes.fr/hal-01865239}{https://hal.archives-ouvertes.fr/hal-01865239}

%\bibitem{MR06}
%R. Mikulevicius and B.L. Rozovskii,
%\emph{Stochastic Navier--Stokes Equations for Turbulent Flows},
%SIAM J. Math. Anal., \textbf{35}(5), 1250–1310 (2006).  
%\href{https://doi.org/10.1137/S0036141002409167}{https://doi.org/10.1137/S0036141002409167}  

\bibitem{Oel84}
K. Oelschl\"ager,
\emph{A Martingale Approach to the Law of Large Numbers for Weakly Interacting Stochastic Processes},
Annals of Prob \textbf{12}, No. 2 (1984), pp. 458-479. 
\href{https://www.jstor.org/stable/2243483}{https://www.jstor.org/stable/2243483}

%\bibitem{P19}
%T.N. Palmer, 
%\emph{Stochastic weather and climate models}, Nat Rev Phys \textbf{1}, 463–471 (2019). \href{https://doi.org/10.1038/s42254-019-0062-2}{https://doi.org/10.1038/s42254-019-0062-2}

\bibitem{Pro}
P.E. Protter,
\emph{Stochastic integration and differential equations},
Stochastic Modelling and Applied Probability, 2nd Ed., Springer 2005.

%\bibitem{RMC17}
%V. Resseguier, E. M\'emin, B. Chapron, 
%\emph{Geophysical flows under location uncertainty, Part I: Random transport and general models}, 
%Geophysical \& Astrophysical Fluid Dynamics \textbf{111}(3): pp. 149-176 (2017). 

%\bibitem{RMC17a}
%V. Resseguier, E. M\'emin, B. Chapron, 
%\emph{Geophysical flows under location uncertainty, Part II: Quasigeostrophic models and efficient ensemble spreading},  Geophysical \& Astrophysical Fluid Dynamics  \textbf{111}(3): pp. 177-208 (2017). 

%\bibitem{RMC17b}
%V. Resseguier, E. M\'emin, B. Chapron, 
%\emph{Geophysical flows under location uncertainty, Part III: SQG and frontal dynamics under strong turbulence},  Geophysical \& Astrophysical Fluid Dynamics \textbf{111}(3): pp. 209-227 (2017). 

\bibitem{Scheeler}
M.W. Scheeler, D.Kleckner,  D. Proment, G.L. Kindlmann, W.T.M. Irvine,
\emph{Helicity conservation by flow across scales in reconnecting vortex links and knots},
PNAS \textbf{111} Nr. 43, 15350-15355 (2014). \href{https://doi.org/10.1073/pnas.1407232111}{https://doi.org/10.1073/pnas.1407232111}

%\bibitem{S91}
%A. Sznitman,  
%\emph{Topics in propagation of chaos},
%In: Hennequin (ed.) Ecole d'Et\'e de Probabilit\'es de Saint-Flour XIX — 1989. LN in Math., vol 1464. Springer (1991). \href{https://doi.org/10.1007/BFb0085169}{https://doi.org/10.1007/BFb0085169}

%\bibitem{TW16}
%J. Teichmann, M.V. W\"uthrich, 
%\emph{Consistent yield curve prediction}, Astin Bulletin \textbf{46}2 (2016), pp. 191-224.
%\href{https://doi.org/10.1017/asb.2015.30}{https://doi.org/10.1017/asb.2015.30}.

%\bibitem{Level1}
%Directive 138/2009/EC (Solvency~II)
%\href{https://www.eiopa.europa.eu/rulebook-categories/directive-1382009ec-solvency-ii-directive_en}{https://www.eiopa.europa.eu/rulebook-categories/directive-1382009ec-solvency-ii-directive\_en}, downloaded on 21. June 2020. 

\end{thebibliography}
\end{document}